\newenvironment{proof}{\paragraph*{Proof:}}{\hfill$\square$}
\def\@normalsize{\@setsize\normalsize{13pt}\xipt\@xipt
	\abovedisplayskip 11pt plus3pt minus6pt
	\belowdisplayskip \abovedisplayskip
	\abovedisplayshortskip \z@ plus3pt
	\belowdisplayshortskip 6.6pt plus3.5pt minus3pt} 
\def\small{\@setsize\small{12pt}\xipt\@xipt
	\abovedisplayskip 10pt plus2pt minus5pt
	\belowdisplayskip \abovedisplayskip
	\abovedisplayshortskip \z@ plus3pt
	\belowdisplayshortskip 6pt plus3pt minus3pt
	\def\@listi{\topsep 6pt plus 2pt minus 2pt
		\parsep 3pt plus 2pt minus 1pt
		\itemsep \parsep}}
\def\footnotesize{\@setsize\footnotesize{10pt}\ixpt\@ixpt
	\abovedisplayskip 8pt plus 2pt minus 4pt
	\belowdisplayskip \abovedisplayskip
	\abovedisplayshortskip \z@ plus 1pt
	\belowdisplayshortskip 4pt plus 2pt minus 2pt
	\def\@listi{\topsep 4pt plus 2pt minus 2pt
		\parsep 2pt plus 1pt minus 1pt
		\itemsep \parsep}}
\def\scriptsize{\@setsize\scriptsize{9.5pt}\viiipt\@viiipt}
\def\tiny{\@setsize\tiny{7pt}\vipt\@vipt}
\def\large{\@setsize\large{14pt}\xiipt\@xiipt}
\def\Large{\@setsize\Large{18pt}\xivpt\@xivpt}
\def\LARGE{\@setsize\LARGE{22pt}\xviipt\@xviipt}
\def\huge{\@setsize\huge{25pt}\xxpt\@xxpt}
\def\Huge{\@setsize\Huge{30pt}\xxvpt\@xxvpt}
\def\section{\@startsection {section}{1}{\z@}%
	{-1.5\baselineskip plus-1pt minus-3pt}{1\baselineskip plus1pt minus2pt}%
	{\centering\normalsize\bf}}
\def\subsection{\@startsection{subsection}{2}{\z@}%
	{-1\baselineskip plus-1pt minus-2pt}{1\baselineskip plus1pt minus2pt}%
	{\normalsize\sc\noindent}}
\def\subsubsection{\@startsection{subsubsection}{3}{\z@}%
	{-1\baselineskip plus-1pt minus-2pt}{1sp}{\normalsize\it\noindent}}
\def\paragraph{\@startsection{paragraph}{4}{\z@}%
	{1\baselineskip plus1pt minus2pt}{-1em}{\normalsize\it\noindent}}
\let\subparagraph=\paragraph
\def\tableofcontents{\@restonecolfalse\if@twocolumn\@restonecoltrue
	\onecolumn\fi\OSIDcont\@starttoc{con}\if@restonecol\twocolumn\fi}
\def\l@section{\@dottedtocline{1}{0em}{.66em}}
\def\thebibliography#1{\section*{{Bibliography}\@mkboth
		{BIBLIOGRAPHY}{BIBLIOGRAPHY}}\footnotesize\rm\list
	{[\arabic{enumi}]}{\settowidth\labelwidth{[#1]}\leftmargin\labelwidth
		\advance\leftmargin\labelsep\usecounter{enumi}}
	\def\newblock{\hskip .11em plus .33em minus -.07em}
	\sloppy\clubpenalty4000\widowpenalty4000
	\sfcode`\.=1000\relax}
\def\ps@myheadings{\let\@mkboth\@gobbletwo
	\def\@oddhead{\hfil{\footnotesize\rm\rightmark}\hfil}
	\def\@evenhead{\hfil{\footnotesize\rm\leftmark}\hfil}
	\def\@oddfoot{\hfil{\footnotesize\sf\artid-\thepage}\hfil}
	\def\@evenfoot{\hfil{\footnotesize\sf\artid-\thepage}\hfil}
	\def\sectionmark##1{}\def\subsectionmark##1{}}
\newcounter{paPer}     %
\def\ps@osiD{\let\@mkboth\@gobbletwo
	\def\@oddfoot{\hfil{\footnotesize\sf\artid-\thepage}\hfil}
	\def\@evenhead{}\let\@evenfoot\@oddfoot}
\def\cite{\@ifnextchar [{\@tempswatrue\@Rcitex}{\@tempswafalse\@Rcitex[]}}
\def\@Rcitex[#1]#2{\if@filesw\immediate\write\@auxout{\string\citation{#2}}\fi
	\def\@citea{}\@cite{\@for\@citeb:=#2\do
		{\@citea\def\@citea{,\penalty\@m\,}\@ifundefined
			{b@\the\value{paPer}R\@citeb}{{\bf ?}\@warning
				{Citation `\@citeb' on page \thepage \space undefined}}%
			\hbox{\csname b@\the\value{paPer}R\@citeb\endcsname}}}{#1}}
\long\def\@caption#1[#2]#3{\par\addcontentsline{\csname
		ext@#1\endcsname}{#1}{\protect\numberline{\csname
			the#1\endcsname}{\ignorespaces #2}}\begingroup
	\@parboxrestore
	\small                                        %    \normalsize
	\@makecaption{\csname fnum@#1\endcsname}{\ignorespaces #3}\par
	\endgroup}
\let\Rlabel=\label
\let\Rbibitem=\bibitem
\let\Rref=\ref
\let\Rpageref=\pageref
\def\label#1{\expandafter\Rlabel{\the\value{paPer}R#1}}
\def\bibitem#1{\expandafter\Rbibitem{\the\value{paPer}R#1}}
\def\ref#1{\expandafter\Rref{\the\value{paPer}R#1}}
\def\pageref#1{\expandafter\Rpageref{\the\value{paPer}R#1}}
\def\thesection{\arabic{section}.}
\def\YYMm{\rule{0ex}{4em}}
\newtoks\TITsi
\newtoks\TITsii
\def\title#1{\def\TITs{\LARGE{\raggedright\noindent\YYMm #1%
			\vskip8pt\par}}}
\def\author#1{\autMM{#1}\def\LHD{#1}}
\def\and{{\rm\lowercase{and}}}
\def\autMM#1{\TITsii={\vskip10pt\par\normalsize\rm\noindent #1\par}%
\TITsi=\expandafter{\TITs}\edef\TITs{\the\TITsi\the\TITsii}}
\def\address#1{\TITsii={\vskip6pt\par\footnotesize\sl\noindent #1\par}%
\TITsi=\expandafter{\TITs}%
\edef\TITs{\the\TITsi\the\TITsii}}
\def\received#1{\TITsii={\vskip10pt\par\small\rm\noindent(Received: #1)\par}%
\TITsi=\expandafter{\TITs}\edef\TITs{\the\TITsi\the\TITsii}}
\def\headtitle#1{\def\RHD{#1}}
\def\headauthor#1{\def\LHD{#1}}
\def\abst{{\bf Abstract.}}
\def\abstract#1{\TITs
\vskip15pt\par\noindent
{\footnotesize{\abst~} #1\vskip3pt\par}
\markright{\RHD}
\markboth{\LHD}{\RHD}}
\def\startpaper{%
\cleardoublepage
\setcounter{section}{0}
\stepcounter{paPer}
\setcounter{equation}{0}
\setcounter{footnote}{0}
\setcounter{figure}{0}
\setcounter{table}{0}
\def\theequation{\arabic{equation}}
\def\thefootnote{\arabic{footnote}}
\setcounter{defn}{0}
\setcounter{thm}{0}
\setcounter{lem}{0}
\setcounter{prop}{0}
\setcounter{rem}{0}
\thispagestyle{osiD}}
\def\OSIDcont{\cleardoublepage\thispagestyle{empty}
\markright{}\markboth{}{}
\normalsize\rm
%      \vspace*{-3em}
%      \addtolength{\baselineskip}{-0.35pt}
\hspace*{\fill}{\large\rm
Contents of the Volume \Volume, Number \Number}\hspace*{\fill}
\par\vspace{1.5em}
\par\noindent}
\def\endpaper{\expandafter\label{\the\value{paPer}OpSy}}
\def\1{{\mathchoice{\rm 1\mskip-4mu l}{\rm 1\mskip-4mu l}%
{\rm 1\mskip-4.5mu l}{\rm 1\mskip-5mu l}}}
\def\varkappa{\mbox{\bBB\char 123}}
\def\longhookrightarrow{\lhook\joinrel\relbar\joinrel\rightarrow}
\def\longhookUp{\lower6pt\hbox{\rotatebox{90}{$\longhookrightarrow$}}}
\def\tr{\mathop{\rm tr}}
\newtheorem{thm}{\rm THEOREM}
\newtheorem{cor}{\rm COROLLARY}
\newtheorem{lem}{\rm LEMMA}
\newtheorem{prop}{\rm PROPOSITION}
\newtheorem{defn}{\rm DEFINITION}
\newtheorem{exmp}{\rm EXAMPLE}
\newtheorem{rem}{\it Remark}
\def\theequation{\thesection\arabic{equation}}
\def\Myskip{\setlength{\baselineskip}{13pt}}
\def\text#1{\quad\mbox{\rm  #1 }\quad}
\def\DOInumber{}
\begin{document}

\def\artid{0000001}
\def\Volume{15}
\def\Number{1}
\def\Year{2008}
\setcounter{page}{1}

\def\DOInumber{}

\startpaper

\newcommand{\Mn}{M_n(\mathbb{C})}
\newcommand{\Mk}{M_k(\mathbb{C})}
\newcommand{\id}{\mbox{id}}
\newcommand{\ot}{{\,\otimes\,}}
\newcommand{{\Cd}}{{\mathbb{C}^d}}
\newcommand{\sbsigma}{{\mbox{\scriptsize \boldmath $\sigma$}}}
\newcommand{\sbalpha}{{\mbox{\scriptsize \boldmath $\alpha$}}}
\newcommand{\sbbeta}{{\mbox{\scriptsize \boldmath $\beta$}}}
\newcommand{\bsigma}{{\mbox{\boldmath $\sigma$}}}
\newcommand{\balpha}{{\mbox{\boldmath $\alpha$}}}
\newcommand{\bbeta}{{\mbox{\boldmath $\beta$}}}
\newcommand{\bmu}{{\mbox{\boldmath $\mu$}}}
\newcommand{\bnu}{{\mbox{\boldmath $\nu$}}}
\newcommand{\ba}{{\mbox{\boldmath $a$}}}
\newcommand{\bb}{{\mbox{\boldmath $b$}}}
\newcommand{\sba}{{\mbox{\scriptsize \boldmath $a$}}}
\newcommand{\MD}{\mathfrak{D}}
\newcommand{\sbb}{{\mbox{\scriptsize \boldmath $b$}}}
\newcommand{\sbmu}{{\mbox{\scriptsize \boldmath $\mu$}}}
\newcommand{\sbnu}{{\mbox{\scriptsize \boldmath $\nu$}}}
\def\oper{{\mathchoice{\rm 1\mskip-4mu l}{\rm 1\mskip-4mu l}%
		{\rm 1\mskip-4.5mu l}{\rm 1\mskip-5mu l}}}
\def\<{\langle}
\def\>{\rangle}
\def\theequation{\thesection\arabic{equation}}

\thispagestyle{plain}

\title{Asymptotic dynamics in the Heisenberg picture: attractor subspace and Choi-Effros product}
\author{Daniele Amato$^{1,2}$, Paolo Facchi$^{1,2}$, Arturo Konderak$^{3,*}$}
\address{$^1$Dipartimento di Fisica, Universit\`a di Bari, I-70126 Bari, Italy} 
\address{$^2$INFN, Sezione di Bari, I-70126 Bari, Italy}
\address{$^3$Center for Theoretical Physics, Polish Academy of Sciences, 02-668 Warsaw, Poland}
\address{$^*$Corresponding author: \href{mailto:akonderak@cft.edu.pl}{\texttt{akonderak@cft.edu.pl}}}
\headauthor{D. Amato, P. Facchi, and A. Konderak}
\headtitle{Asymptotic dynamics in the Heisenberg picture}
\received{\today}
%\listas{D. Chru\'sci\'nski and A. Kossakowski}{Rotationally Invariant Multipartite States}

\abstract{We study the asymptotic dynamics of open quantum systems in the Heisenberg picture. We find an explicit expression for the attractor subspace and the dynamics that takes place in it. We present the relationship between the attractor subspaces in the Schr\"odinger and Heisenberg pictures and, in particular, the connection between their algebraic structures. An unfolding theorem of the asymptotics, as well as the fine structure of the recently introduced Choi-Effros decoherence-free algebra, are also discussed. Finally, we show how to extend all the results to the class of Schwarz maps.}

\Myskip

%\pacs{03.65.Ud, 03.67.-a}

%\SMP
\section{Introduction}

Besides its fundamental interest, the study of the asymptotic evolution of open quantum systems has been stimulated over the last two decades~\cite{albert2019asymptotics,jex_st_2012,jex_st_2018,wolf2010inverse} by the impressive development of quantum technologies~\cite{koch2022quantum}. In particular, quantum information processing within decoherence-free and noiseless subspaces~\cite{zanardi1997noiseless,zanardi2014coherent},
as well as quantum reservoir engineering~\cite{kraus2008preparation,Wolf_res_eng}, require a solid understanding of the asymptotic subspace and the dynamics taking place in it. 

In view of the applications mentioned above, we need to go beyond the case of dynamics with a unique stationary state, already well studied in various seminal works~\cite{evans1977irreducible,Frigerio_78,Spohn_77} as well as in recent papers~\cite{fagnola2003transience,nigro2019uniqueness}. Most part of the works on the topic focused on the faithful case~\cite{carbone2020period,fagnola2019role,jex_st_2012,jex_st_2018}, which, as also revealed by the results of this Article, turns out to be much simpler than the general case. 
However, several recent papers started addressing the asymptotics of open quantum systems in a general setting. Specifically, Wolf~\cite{wolf2010inverse} and the authors~\cite{AFK_asympt,AFK_asympt_2} analyzed the problem in the Schr{\"o}dinger picture, whose structure is relevant for quantum communication tasks~\cite{Fawzi2025,singh2024zero,singh2024information}. On the other hand, Albert~\cite{albert2019asymptotics} considered the Heisenberg dynamics at large-times, also addressed by Bhat \textit{et al}~\cite{rajaramaperipheral_2,rajarama2022peripheral}, and the authors with an algebraic approach~\cite{AFK_asympt3,AFK_asympt_4}. Interestingly, universal constraints for the dimensions of the asymptotic subspaces of quantum evolutions were also found in~\cite{AF_bounds}.

The goal of the present work is to clarify further several aspects of the asymptotic dynamics of finite-dimensional open quantum systems. In particular, we will highlight the differences between faithful and non-faithful asymptotic evolutions by comparing known and original results. 

Specifically, the structure of the attractor subspace in the Schr\"odinger picture has been completely characterized in~\cite{AFK_asympt_2,wolf2010inverse}, together with the asymptotic dynamics in there, while a corresponding characterization is lacking when we deal with the Heisenberg picture. In this paper, we are going to exhibit such characterization and show that the corresponding structure is ultimately connected with the Choi-Effros product emerging in the asymptotic regime~\cite{rajarama2022peripheral}. 
%and Heisenberg evolutions (Eqs. \eqref{attr_struc_Schr} and \eqref{struc_attr_Heis}), following from the action of the peripheral projections (Eqs. \eqref{P_P_dagger_struc} and \eqref{struc_P_p_Heis}) in both pictures of the dynamics. Let us notice that all such results are based on the decomposition \eqref{H_dec_supp} of the Hilbert space of the system induced by the maximal support of the asymptotic states. Then all these findings will allow us to define a mapping \eqref{map_Heis_Schr} between Schr\"odinger and Heisenberg attractor subspaces in the general case. 
%
It is worthwhile to observe that the mathematical structure of Heisenberg asymptotics is different to the one known in the Schr{\"o}dinger scheme. Such difference does not affect the expectation values, which can be experimentally measured, and thus the two pictures remain physically equivalent.
The asymptotic structure of Heisenberg dynamics is derived only with the help of Schwarz condition~\eqref{op_Schw_ineq}, so it remains valid within the broader class of Schwarz maps, cf.\ Section~\ref{Schw_CP}.\
Furthermore, we will manage to derive a decomposition for the Choi-Effros decoherence-free algebra, introduced in~\cite{AFK_asympt3} in order to study the multiplicative properties of the asymptotic dynamics. In Appendix~A we will also overview the structure of the peripheral eigenvectors of Heisenberg evolutions (Theorem~\ref{per_eigen_th}), already discussed in~\cite[Proposition 3]{albert2019asymptotics}.   

The Article is organized as follows. We will start in Section~\ref{sec:asymptotic_introduction} with an  introduction and a quick review of the main properties of the asymptotic dynamics of open quantum systems, both in the Schr\"odinger picture  (Subsection~\ref{sec:attr_schrodinger}) and in the Heisenberg picture (Subsection~\ref{intro_DFA_sec}). In Section~\ref{struc_attr_Heis_sec} we will present new results regarding the structure of the asymptotics in the Heisenberg picture (Theorem~\ref{struc_Heis_th}), and we will show that its algebraic properties are encoded in the ``faithful component'' of the evolution. Indeed, we show in Section~\ref{struc_DFA_sec} how the Choi-Effros product endowing the attractor subspace arises from the composition product equipping the attractor subspace of a (faithful) reduced evolution. We will also prove an unfolding theorem (Theorem~\ref{th:unfolding_thm}), which parallels the corresponding one in the Schr\"odinger picture~\cite[Theorem 4.1]{AFK_asympt_2}, and allows us to obtain any asymptotic dynamics from a  unital completely positive map. In Section~\ref{Choi-Effros_dec_sec} we will use the structure theorem~\ref{struc_Heis_th} to characterize the Choi-Effros decoherence-free algebra, showing its connection with the faithful part of the map. Importantly, in Section~\ref{Schw_CP} we will generalize our results to the much larger class of Schwarz maps. This shows that the asymptotic dynamics of an open quantum system is not determined by the full fire power of complete positivity, but rather by the much weaker Schwarz property~\eqref{op_Schw_ineq}.

%\textcolor{red}{There are two main results from this paper. First, we will give the general structure of the asymptotic dynamics of a quantum channel. Second, we will completely characterize the Choi-Effros decoherence-free algebra, which is the natural algebra in the asymptotic limit.}

\section{Asymptotic dynamics of open quantum systems}
\label{sec:asymptotic_introduction} 
\subsection{Preliminary notions}

In this section we will briefly recall the main properties of the asymptotic dynamics of open quantum systems.\ In particular, we will consider a finite-dimensional quantum system described by a $d$-dimensional Hilbert space $\mathcal{H}$. 
Let us first work in the Heisenberg picture, where we look at the dynamics of the observables of the system~\cite{Born_Jordan_1925,Born_Jordan_Heisenberg_1926,Heis_1925}. In such scheme the dynamics in the unit time is given by a unital completely positive (UCP) map $\Phi$ on $\mathcal{B}(\mathcal{H})$, see e.g. the recent review~\cite{chruscinski2022dynamical}. In the discrete-time setting of the dynamics, the evolved observable $A(n)$ at time $t=n\in \mathbb{N}$ will be the $n$-fold composition $\Phi^n$ of the map $\Phi$ on the observable $A$ at the initial time $t=0$, namely, $A(n)=\Phi^{n}(A)$.

The dynamics generated by $\Phi$ can be studied on the basis of its spectrum $\mathrm{spect}(\Phi)$, i.e.\ the set of its eigenvalues. It is possible to prove that all the eigenvalues have modulus smaller than or equal to 1, a property which is shared by the larger class of positive and unital maps~\cite{Asorey2008,watrous2018}. In fact, understanding the spectral features of completely positive maps remains essentially an open problem. See~\cite{chruscinski2021constraints,chruscinski2021universal} for interesting attempts in this direction. 

The asymptotic dynamics resulting from the $n\to\infty$ limit of the evolution takes place inside the asymptotic, peripheral or \emph{attractor subspace} of $\Phi$, defined as
\begin{equation}\label{attr_def}
	\mathrm{Attr}(\Phi):= \mbox{span} \{ X \in\mathcal{B}(\mathcal{H}) \,\vert\, \Phi(X)=\lambda X \mbox{ for some } \lambda \in  \mathrm{spect}_{\mathrm{P}}(\Phi)\} ,
\end{equation}
where $ \mathrm{spect}_{\mathrm{P}}(\Phi)$ denotes the \emph{peripheral spectrum} of $\Phi$:
\begin{equation}
	\mathrm{spect}_{\mathrm{P}}(\Phi):=\{ \lambda \in {\mathrm{spect}(\Phi)} \,\vert\, |\lambda|=1 \}.
\end{equation}
The eigenvectors corresponding to eigenvalues of modulus smaller than $1$ will vanish in the asymptotic limit, i.e. 
\begin{equation}
	\| \Phi^n(X) \| = \lvert 	\lambda \rvert^n \| X \| \rightarrow 0, \quad \lambda \in \mathrm{spect} (\Phi), \quad |\lambda|<1.
\end{equation}
%		A crucial property of the attractor subspace $\mathrm{Attr}(\Phi)$ of $\Phi$ which we will heavily use in the following is~\cite[Proposition 6.12]{wolf2012quantum}:
%		\begin{equation}
	%			\label{Attr_inv_sub}
	%			\Phi\mathrm{Attr}(\Phi)=\mathrm{Attr}(\Phi),
	%		\end{equation}
%		i.e. the attractor subspace $\mathrm{Attr}(\Phi)$ is invariant under $\Phi$.
Clearly, the attractor subspace $\mathrm{Attr}(\Phi)$ contains the \emph{fixed-point space} $\mathrm{Fix}(\Phi)$ of $\Phi$
\begin{equation}
	\label{fix_def} \mathrm{Fix}(\Phi) := \{ X \in \mathcal{B}(\mathcal{H}) \,\vert\, \Phi(X)=X \},
\end{equation}
namely the eigenspace of $\Phi$ corresponding to the eigenvalue $\lambda = 1$. Physically, the \emph{stationary observables} of the evolution given by $\Phi$ belong to the subspace $\mathrm{Fix}(\Phi)$.

%		Consider the spectral or \textit{Jordan} decomposition of a UCP map $\Phi$
%		\begin{equation}
	%			\Phi=\sum_{\lambda_k \in \mathrm{spect}(\Phi)} (\lambda_k\mathcal{P}_k + \mathcal{N}_k),
	%			\label{Jordan_dec}
	%		\end{equation}
%		where $\mathcal{P}_{k}$ and $\mathcal{N}_{k}$ are the projections and eigenilpotents of $\Phi$ corresponding to the $k$-th eigenvalue $\lambda_{k}$~\cite{kato2013perturbation}. 
%			In particular, let us denote by $\mathcal{P}$ the projection onto the fixed-point space $\mathrm{Fix}(\Phi)$, the eigenspace of $\Phi$ corresponding to the eigenvalue $\lambda=1$. 
%		Importantly, all peripheral eigenvalues $\lambda_k \in \mathrm{spect}_P(\Phi)$ are semisimple, i.e. the corresponding eigennilpotents $\mathcal{N}_k$ vanish~\cite[Proposition 6.2]{wolf2012quantum}. 
The \emph{peripheral projection} $\mathcal{P}_{\mathrm{P}}$ of $\Phi$ is defined as
\begin{align}
	\label{P_p_def}
	\mathcal{P}_{\mathrm{P}}& :=\sum_{\lambda_k\in\mathrm{spect}_{\mathrm{P}}(\Phi)}\mathcal{P}_k,
\end{align}
where $\mathcal{P}_k$ denotes the spectral projection of $\Phi$ corresponding to the peripheral eigenvalue $\lambda_k \in \mathrm{spect}_{\mathrm{P}}(\Phi)$.
Clearly, the range of the projection $\mathcal{P}_{\mathrm{P}}$ is the attractor subspace $\mathrm{Attr}(\Phi)$. Let us call $\mathcal{P}$  the eigenprojection onto the fixed-point space $\mathrm{Fix}(\Phi)$, corresponding to $\lambda=1$.

Both projections $\mathcal{P}$ and $\mathcal{P}_{\mathrm{P}}$ can be written in terms of the UCP map $\Phi$ as~\cite{lindblad1999,wolf2012quantum}
\begin{equation}
	\mathcal{P}=\lim_{N\rightarrow \infty} \frac{1}{N}\sum_{n=1}^{N}\Phi^n, %\label{P_for}
	\qquad
	\mathcal{P}_{\mathrm{P}}=\lim_{i\rightarrow \infty}\Phi^{n_i},\label{P_p_for}
\end{equation}
for some  subsequence $( n_i)_{i\in\mathbb{N}}$. 
From~\eqref{P_p_for} it is evident that both $\mathcal{P}$ and $\mathcal{P}_{\mathrm{P}}$ are UCP maps because $\Phi$ is, and the set of UCP maps forms a closed convex semigroup.

The space $\mathcal{B}(\mathcal{H})$ of bounded operators on $\mathcal{H}$ is a Banach algebra with respect to the adjoint $\ast$ and the operator norm $\| \cdot \|$, and in finite dimensions can be endowed with a Hilbert space structure by means of the \emph{Hilbert-Schmidt scalar product}
\begin{equation}
	\label{eq:hilbert_schmidt_inner_product}
	\left\langle{A}|{B}\right\rangle_{\mathrm{HS}} := \tr(A^\ast B), \quad A,B \in \mathcal{B}(\mathcal{H}).
\end{equation}

The Schr\"odinger dynamics, describing the dynamics of the states of the system, is given by $\Phi^\dagger$, the Hilbert-Schmidt adjoint of $\Phi$. As a result, $\Phi^\dagger$ is a completely positive trace-preserving (CPTP) map on $\mathcal{B}(\mathcal{H})$, and it is referred to as a \emph{quantum channel}~\cite{nielsen2002quantum}. Similarly to the Heisenberg picture, if $\rho$ is the initial state of the system, that is a positive semidefinite operator on $\mathcal{H}$ with unit trace,  $(\Phi^{\dagger})^n(\rho)$ describes the evolved state at time $t=n\in \mathbb{N}$. A state $\rho$ is said to be \emph{stationary} whenever it is invariant under the Schr\"odinger evolution, namely

\begin{equation}
	\Phi^\dagger(\rho)=\rho.
\end{equation}
Thus, we can define the attractor and fixed-point subspaces $\mathrm{Attr}(\Phi^\dagger)$ and $\mathrm{Fix}(\Phi^\dagger)$ of the Schr\"odinger dynamics $\Phi^\dagger$ as in equations~\eqref{attr_def} and~\eqref{fix_def} respectively. We can also introduce the peripheral and spectral projections onto $\mathrm{Attr}(\Phi^\dagger)$ and $\mathrm{Fix}(\Phi^\dagger)$  respectively which turn out to be the (Hilbert-Schmidt) adjoints $\mathcal{P}_{\mathrm{P}}^\dagger$ and $\mathcal{P}^\dagger$ of the projections $\mathcal{P}_{\mathrm{P}}$ and $\mathcal{P}$ of $\Phi$. Notice that the fixed-point subspace of a quantum channel is always non-empty~\cite[Theorem 4.24]{watrous2018}

An important family of UCP maps is given by the set of faithful maps. They will be the building block for the study of the asymptotic dynamics.
\begin{defn}
	\label{faith_def}
	A UCP map $\Phi$ is said to be \emph{faithful} if there exists an invertible stationary state $\rho$ for $\Phi^\dagger$, the Hilbert-Schmidt adjoint of $\Phi$, i.e.
	\begin{equation}
		\Phi^\dagger (\rho)=\rho >0.
	\end{equation} 
\end{defn}
%		
%		Physically, this property is equivalent to require that the asymptotic survival probability for any non-zero vector state $\psi$ in $\mathcal H$ is non-vanishing:
%		\begin{equation}
	%			\expval{\mathcal{P}_{\mathrm{P}}^\dagger(\ketbra{\psi})}{\psi} \neq 0, \quad 0 \neq  \ket{\psi} \in\mathcal{H},
	%		\end{equation}
%		i.e. no population decay occurs in the asymptotic limit of the Heisenberg dynamics $\Phi$ \cite{albert2019asymptotics}.

We will see in the following sections that the assumption of a faithful map greatly simplifies the algebraic properties of the asymptotic dynamics.

\subsection{The attractor subspace in the Schr\"odinger picture}
\label{sec:attr_schrodinger}
In this section, we will provide a summary of the main results in the literature regarding the explicit structure of the attractor subspace  in the Schr\"odinger picture, whereas the corresponding issue in the Heisenberg picture will be addressed in Section~\ref{struc_attr_Heis_sec}.

As in the previous section, we will denote by $\Phi$ a UCP map, i.e.\ Heisenberg dynamics, and its corresponding Schr\"odinger dynamics is given by its Hilbert-Schmidt adjoint $\Phi^\dagger$. 

The asymptotic dynamics of $\Phi^\dagger$ is supported on the support of the fixed point $\mathcal P^\dagger(\mathbb I)$, namely given $X\in\mathrm{Attr}(\Phi^\dagger)$, its support $\mathrm{supp}(X) := \ker(X)^\perp$ and its range $\mathrm{ran}(X)$ satisfy
\begin{equation}
	\label{supp_fix}
	\mathrm{supp}(X),\mathrm{ran}(X) \subseteq \mathrm{supp}(\mathcal{P}^\dagger(\mathbb{I})):= \mathcal H_0 
\end{equation}
In particular, $\mathcal{P}^\dagger(\mathbb{I})$ is a maximum-rank fixed point of $\Phi^\dagger$. 
Importantly, as proved in~\cite[Appendix A]{AFK_asympt}, we also have that
\begin{equation}
	\mathcal{H}_{0} = \mathrm{supp}(\mathcal{P}_{\mathrm{P}}^\dagger (\mathbb{I})).
\end{equation}

The Hilbert space $\mathcal{H}$ can be decomposed as
\begin{equation}
	\mathcal{H}=\mathcal{H}_{0} \oplus \mathcal{H}_{1},
	\label{H_dec_supp}
\end{equation}
with $\mathcal{H}_1 = \mathcal{H}_{0}^\perp = \ker(\mathcal{P}_{\mathrm{P}}^\dagger(\mathbb{I}))$ the orthogonal complement of $\mathcal H_0$. Observe that equality $\mathcal{H}=\mathcal{H}_{0}$ is equivalent to say that the channel $\Phi^\dagger$ is faithful~\cite{AFK_asympt_2}.

We can now write the main result regarding the structure of the asymptotic dynamics in the Schr\"odinger picture~\cite{wolf2010inverse}.

\begin{thm}
	\label{Wolf_thm}
	Let $\Phi^\dagger: \mathcal B(\mathcal H)\rightarrow \mathcal B(\mathcal H)$ be a quantum channel. Then, there exists a decomposition of the Hilbert space $\mathcal H$ in the form
	\begin{equation}
		\label{eq:decomp_hilbert_space}
		\mathcal H= \left(\bigoplus_{k=1}^M \mathcal H_{k,1}\otimes \mathcal H_{k,2}\right)\oplus \mathcal H_1,
	\end{equation}
	where $\mathcal H_1= \ker (\mathcal P_{\mathrm{P}}^\dagger(\mathbb I))$, $\mathcal H_{k,i}$ are Hilbert spaces for all $k=1,\dots, M$ and $i=1,2$, and there are invertible density operators $\rho_k$ on $\mathcal H_{k,2}$ such that
	\begin{equation}
		\label{attr_struc_Schr}
		\mathrm{Attr}(\Phi^\dagger)= \left( \bigoplus_{k=1}^M \mathcal B(\mathcal H_{k,1})\otimes\mathbb C\rho_k\right)\oplus0.
	\end{equation}
	Thus, an element $X\in\mathrm{Attr}(\Phi^\dagger)$ has the form
	\begin{equation}
		X=\left(\bigoplus_{k=1}^M x_k\otimes \rho_k\right)\oplus 0 ,
	\end{equation}
	and the action of the quantum channel $\Phi^\dagger$ on $X$ is given by
	\begin{equation}
		\Phi^\dagger(X)=\left(\bigoplus_{k=1}^M U_kx_{\pi^{-1}(k)}U_k^\ast \otimes \rho_k\right)\oplus 0,
	\end{equation}	
	for some fixed permutation $\pi$ of $\{ 1, \dots , M \}$ with $d_{k}:= \dim(\mathcal{H}_{k,1}) = d_{\pi(k)}$ for all $k=1, \dots , M$, and unitary operators $U_k$ on $\mathcal{B}(\mathcal{H}_{k,1})$.
\end{thm}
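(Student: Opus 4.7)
My plan is to reduce the theorem to the faithful case on the support $\mathcal H_0$ of the peripheral projection, exploit the Kadison--Schwarz inequality to make $\mathrm{Attr}(\Phi)$ into a $C^\ast$-subalgebra of $\mathcal B(\mathcal H_0)$, apply Wedderburn's theorem, and then dualise back to the Schr\"odinger picture.

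First, by~\eqref{supp_fix} every $X\in\mathrm{Attr}(\Phi^\dagger)$ has $\mathrm{supp}(X),\mathrm{ran}(X)\subseteq\mathcal H_0$, so the orthogonal splitting $\mathcal H = \mathcal H_0 \oplus \mathcal H_1$ already accounts for the $\oplus\,0$ summand in~\eqref{eq:decomp_hilbert_space} and~\eqref{attr_struc_Schr}. The compression of $\Phi^\dagger$ to $\mathcal B(\mathcal H_0)$ is a faithful channel, its invariant state being the full-rank restriction of $\mathcal P^\dagger(\mathbb I)$, so it suffices to prove the structural claim assuming $\Phi^\dagger$ itself is faithful, with Heisenberg dual $\Phi$ a faithful UCP map on $\mathcal B(\mathcal H_0)$.

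For any peripheral eigenvector $\Phi(A)=\lambda A$ with $|\lambda|=1$, the Kadison--Schwarz inequality gives $\Phi(A^\ast A)\geq A^\ast A$; pairing this against the faithful invariant state $\rho$ and using $\tr(\rho\,\Phi(\cdot))=\tr(\rho\,\cdot)$ collapses it to equality. The standard consequence is that peripheral eigenvectors lie in the multiplicative domain of $\Phi$, so $\mathrm{Attr}(\Phi)$ is a $\ast$-subalgebra of $\mathcal B(\mathcal H_0)$ on which $\Phi$ acts multiplicatively, and on which the Choi--Effros product of~\cite{rajarama2022peripheral} coincides with the ordinary product. Wedderburn's theorem now provides a unitary identification $\mathcal H_0 \cong \bigoplus_k \mathcal H_{k,1}\otimes\mathcal H_{k,2}$ such that $\mathrm{Attr}(\Phi)\cong \bigoplus_k \mathcal B(\mathcal H_{k,1})\otimes \mathbb I_{\mathcal H_{k,2}}$.

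The restriction of $\Phi$ to $\mathrm{Attr}(\Phi)$ is an invertible $\ast$-automorphism, since its spectrum lies on the unit circle. It must therefore permute the simple Wedderburn blocks by some permutation $\pi$ — forcing $d_k = d_{\pi(k)}$ — and act on each transported $\mathcal B(\mathcal H_{k,1})$ factor as conjugation by a unitary $U_k$. Dualising via the Hilbert--Schmidt pairing and writing the faithful invariant state as $\rho = \bigoplus_k p_k\,(\mathbb I_{\mathcal H_{k,1}}/d_k)\otimes\rho_k$ for invertible densities $\rho_k$ on $\mathcal H_{k,2}$, a dimension count together with the duality then yields $\mathrm{Attr}(\Phi^\dagger)\cong \bigoplus_k \mathcal B(\mathcal H_{k,1})\otimes \mathbb C\rho_k$, and the action of $\Phi^\dagger$ takes the stated form.

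The main obstacle will be this last dualisation step: one has to verify that the unitary conjugation extracted in the Heisenberg picture transports cleanly onto the $\mathcal B(\mathcal H_{k,1})$ factor in the Schr\"odinger picture without distorting the accompanying $\rho_k$ factor, and that the permutation $\pi$ is the same in both pictures. This will rely on the uniqueness of $\rho_k$ as an invariant state of the induced sub-dynamics on $\mathcal B(\mathcal H_{k,2})$, together with the intertwining $\Phi\circ\mathcal P_{\mathrm{P}} = \mathcal P_{\mathrm{P}}\circ\Phi$ between the channel and its peripheral projection.
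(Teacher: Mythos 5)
The paper does not actually prove Theorem~\ref{Wolf_thm}; it is recalled from the literature (Wolf--Perez-Garcia and the authors' earlier work), so there is no in-paper argument to compare against line by line. Your outline follows the standard route, and its first three steps are sound: the reduction to the faithful reduced channel on $\mathcal H_0$ via \eqref{supp_fix}; the Kadison--Schwarz/multiplicative-domain argument showing that a peripheral eigenvector $A$ of the faithful Heisenberg map satisfies $\Phi(A^\ast A)=A^\ast A=\Phi(A)^\ast\Phi(A)$, so that $\mathrm{Attr}(\Phi)$ is a unital $C^\ast$-subalgebra of $\mathcal B(\mathcal H_0)$ on which $\Phi$ restricts to a $\ast$-automorphism; and the Wedderburn decomposition together with the block-permutation/inner-automorphism structure of that automorphism, which forces $d_k=d_{\pi(k)}$.

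The genuine gap is the dualisation step, which you flag but do not close, and two assertions there need proof rather than assertion. (i) You write the faithful invariant state as $\rho=\bigoplus_k p_k(\mathbb I_{k,1}/d_k)\otimes\rho_k$; but that an invariant state factorises blockwise into a tensor product, with a second factor $\rho_k$ independent of the choice of invariant state, is essentially the nontrivial content being established, and appealing to ``uniqueness of $\rho_k$ as an invariant state of the induced sub-dynamics on $\mathcal B(\mathcal H_{k,2})$'' presupposes that such a sub-dynamics exists, which again requires the factorisation. (ii) Even granting the form of $\rho$, a dimension count only gives $\dim\mathrm{Attr}(\Phi^\dagger)=\sum_k d_k^2=\dim\bigl(\bigoplus_k\mathcal B(\mathcal H_{k,1})\otimes\mathbb C\rho_k\bigr)$; one still has to exhibit an inclusion of one space in the other. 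A clean way to close both gaps is to observe that the peripheral projection $\mathcal P_{\mathrm P}$ is a unital completely positive idempotent whose range is the subalgebra $\bigoplus_k\mathcal B(\mathcal H_{k,1})\otimes\mathbb C\mathbb I_{k,2}$, hence by Tomiyama's theorem a conditional expectation, i.e.\ a bimodule map over its range; the general form of such a map is $X\mapsto\bigoplus_k\mathrm{tr}_{k,2}\bigl(P_kXP_k(\mathbb I_{k,1}\otimes\rho_k)\bigr)\otimes\mathbb I_{k,2}$ for uniquely determined states $\rho_k$, which are invertible because $\mathcal P_{\mathrm P}^\dagger(\mathbb I)$ has support equal to all of $\mathcal H_0$. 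Taking the Hilbert--Schmidt adjoint then yields \eqref{attr_struc_Schr} directly, and the intertwining $\mathcal P_{\mathrm P}\circ\Phi=\Phi\circ\mathcal P_{\mathrm P}$ transports the permutation and the unitaries to the Schr\"odinger side exactly as you describe.
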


The previous result gives an explicit expression for the asymptotics in the Schr\"odinger picture. Let us now clarify the general structure theorem~\ref{Wolf_thm} with the following simple example

\begin{exmp}\label{ex:amplit_damp}
    Consider the amplitude damping channel for a qubit, $\mathcal H=\mathbb C^2$, given by
    \begin{equation}\label{eq:amplit_damp}
        X=\begin{pmatrix}
            x_{00}&x_{01}\\
            x_{10}&x_{11}
        \end{pmatrix}\mapsto
        \Phi^\dagger(X)=\begin{pmatrix}
            x_{00}+\frac{3}{4}x_{11}&\frac{1}{2}x_{01}\\
            \frac{1}{2}x_{10}&\frac{1}{4}x_{11}
        \end{pmatrix}.
    \end{equation}
    Its peripheral projection reads
    % \begin{equation}
    %     \Phi^{\dagger n}(X)=
    %     \begin{pmatrix}
    %         x_{00}+\left(1-\frac{1}{4^n}\right)x_{11}&\frac{1}{2^n}x_{01}\\
    %         \frac{1}{2^n}x_{10}&\frac{1}{4^n}x_{11}
    %     \end{pmatrix}\xrightarrow{n\rightarrow \infty}\begin{pmatrix}
    %         x_{00}+x_{11}&0\\
    %         0&0
    %     \end{pmatrix}
    % \end{equation}
\begin{equation}
\label{per_proj_ampl_Schr}
\mathcal P_{\mathrm P}^{\dagger}(X)=
    \begin{pmatrix}
        x_{00}+x_{11}&0\\
        0&0
    \end{pmatrix}.
\end{equation}
Physically, the channel~\eqref{eq:amplit_damp} describes a probability flow from the orthogonal component $\mathcal H_0^{\perp}=\mathbb C |{1}\rangle$ to $\mathcal H_0=\mathbb C|{0}\rangle$, and this is a simple example of non-faithful channel. This could, for example, model the decay process of an excited atomic state.
\end{exmp}
More generally, we can always obtain a quantum channel $\Phi^\dagger$ with a desired asymptotic dynamics by using the unfolding theorem discussed in~\cite[Theorem 4.1]{AFK_asympt_2}, whose Heisenberg analogue is shown in Theorem~\ref{th:unfolding_thm} below.

We can now easily obtain insights regarding the Heisenberg dynamics in the faithful case. First of all, in the faithful case, the decomposition~\eqref{eq:decomp_hilbert_space} takes the form
\begin{equation}
	\label{eq:decomposition_hilbert_faithful}
	\mathcal H = \bigoplus_{k=1}^M \mathcal H_{k,1}\otimes \mathcal H_{k,2}.
\end{equation}
Moreover, the peripheral projection $\mathcal P_{\mathrm{P}}^\dagger$ in the Schr\"odinger picture is given by
\begin{equation}
	\label{eq:projection_faithful_case_shrodinger}
	\mathcal P_{\mathrm{P}}^\dagger(X)=\bigoplus_{k=1}^M \mathrm{tr}_{k,2}(P_kXP_k) \otimes \rho_k,
\end{equation}
where $P_k$ is the projection onto $\mathcal{H}_{k,1}\otimes \mathcal{H}_{k,2}$ and
$\mathrm{tr}_{k,2}$ denotes the partial trace  over $\mathcal{H}_{k,2}$. Its Hilbert-Schmidt adjoint, being the corresponding peripheral projection in the Heisenberg picture, reads  
\begin{equation}
	\label{eq:projection_faithful_case_heisenberg}
	\mathcal P_{\mathrm{P}}(X)=\bigoplus_{k=1}^M \mathrm{tr}_{k,2}\bigl(P_kXP_k(\mathbb I_{k,1}\otimes \rho_k)\bigr) \otimes{ \mathbb I_{k,2}},
\end{equation}
where the map
\begin{equation}
\label{partial_trace_Heis}
Y \in \mathcal{B}(\mathcal{H}_1 \otimes \mathcal{H}_2) \mapsto \mathrm{tr}_{2}(Y(\mathbb{I} \otimes \rho_2)) \in \mathcal{B}(\mathcal{H}_1) , \quad \rho_2 \in \mathcal{S}(\mathcal{H}_2),
\end{equation}
is the Hilbert-Schmidt adjoint of $\mathcal{B}(\mathcal{H}_{1}) \ni X \mapsto X \otimes \rho_2$, namely the addition of an uncorrelated ancillary state $\rho_2$~\cite[Chapter 9 Lemma 1.1]{davies1976}.\ Notice that the map~\eqref{partial_trace_Heis} is involved in Stinespring's dilation theorem of UCP maps~\cite{stinespring1955}.
Consequently, the attractor subspace of the Heisenberg evolution in the faithful case will be
\begin{equation}
	\label{eq:algebra_structure_attractor}
	\mathrm{Attr}(\Phi)=\mathrm{ran} (\mathcal{P}_{\mathrm{P}}) =
	\bigoplus_{k=1}^M \mathcal B(\mathcal H_{k,1})\otimes \mathbb{C} \mathbb I_{k,2}.
\end{equation}

Observe that  in the faithful case the attractor subspace is clearly a $C^*$-subalgebra of $\mathcal{B}(\mathcal{H})$, as it is closed under the standard composition product. We can also obtain the action of the asymptotic map in the Heisenberg picture under the faithfulness assumption. Given
\begin{equation}
	\label{struc_elem_attr}
	X=\bigoplus_{k=1}^M x_k \otimes \mathbb I_{k,2} \in\mathrm{Attr}(\Phi),
\end{equation}
we have
\begin{equation}
	\Phi(X)=\bigoplus_{k=1}^M U_{\pi(k)}^\ast x_{\pi(k)}U_{\pi(k)} \otimes\mathbb I_{k,2}.
	\label{eq:Phifaith}
\end{equation}
Indeed, given $X \in \mathrm{Attr}(\Phi)$ and $Y \in \mathcal{B}(\mathcal{H})$, then $\Phi(X) \in \mathrm{Attr}(\Phi)$ and consequently $\mathcal{P}_{\mathrm{P}}(\Phi(X)) = \Phi(X)$, so we have
\begin{equation}
	\tr(X\Phi^\dagger(Y)) = \tr(\Phi(X)Y) = \tr(\Phi(X)\mathcal{P}_{\mathrm{P}}^\dagger(Y)) := \tr(X\Phi^\dagger(Z)),
\end{equation}
where
\begin{equation}
	\mathrm{Attr}(\Phi^\dagger) \ni Z = \bigoplus_{k=1}^M z_k \otimes \rho_{k,2} \mapsto \Phi^\dagger(Z) = \bigoplus_{k=1}^M U_{k} z_{\pi^{-1}(k)} U_{k}^\ast\otimes \rho_{k,2} .
\end{equation}
by Theorem~\ref{Wolf_thm}. Therefore by using~\eqref{struc_elem_attr} 
\begin{equation}
	\begin{split}
		\tr(X\Phi^\dagger(Z)) &= \sum_{k=1}^M \tr( U_k^\ast x_k U_k z_{\pi^{-1}(k)} ) = \sum_{k=1}^M \tr(U_{\pi(k)}^\ast x_{\pi(k)} U_{\pi(k)} z_k) \\& := \tr(WZ) = \tr(WY), \; W := \bigoplus_{k=1}^M U_{\pi(k)}^\ast x_{\pi(k)} U_{\pi(k)} \otimes \mathbb{I}_{k,2},
	\end{split}
\end{equation}
where we used $\mathcal{P}_{\mathrm{P}}(W) = W$. Thus~\eqref{eq:Phifaith} readily follows by the arbitrariness of $Y$.
Moreover, we can easily check that $\Phi$ acts as a $*$-automorphism on $\mathrm{Attr}(\Phi)$ with respect to the composition product, in line with~\cite{rajarama2022peripheral}. As we shall see in the following section, this can be generalized to the non-faithful case by introducing an appropriate product. 

However, permutations may imply the lack of unitarity of the asymptotic dynamics, see~\cite{AFK_asympt_2} and~\cite[Example 3.2]{rajaramaperipheral_2}. Observe that, whenever we have a permutation $\pi$, the automorphism~\eqref{eq:Phifaith} is external, that is, it cannot be written as a unitary conjugation, with the unitary being an element of $\mathrm{Attr}(\Phi)$ itself. Explicitly, we cannot find a unitary $U\in\mathrm{Attr}(\Phi)$ such that
\begin{equation}
	\Phi(X) = U^\ast XU,\quad \forall X \in\mathrm{Attr}(\Phi).
\end{equation}

%It makes the attractor space $\mathrm{Attr}(\Phi)$ a $C^\ast$-algebra, see~\cite[Theorem~2.3]{rajarama2022peripheral}.

\subsection{Choi-Effros product}
\label{intro_DFA_sec}

We will conclude this introductory part by recalling the~\emph{Choi-Effros product}, which makes the attractor subspace of a UCP map a $C^\ast$-algebra in the non-faithful case, see~\cite[Theorem~2.3]{rajarama2022peripheral}. It was introduced by Choi and Effros in a seminal paper~\cite[Theorem~3.1]{choi1977injectivity} for injective operator systems.

\begin{thm}
	\label{th:C_star_UCP}
	Let $\Phi$ be a UCP map, and let $\mathcal P_{\mathrm{P}}$ be its peripheral projection. Then its attractor subspace $\mathrm{Attr}(\Phi)$ is a unital ${C}^\ast$-algebra with respect to the Choi-Effros product, defined as
	\begin{equation}
		\label{star_prod_H}
		X \star Y := \mathcal{P}_{\mathrm{P}}(XY),
	\end{equation}
	the involution and the norm being the native operations of $\mathcal{B}(\mathcal{H})$.
	
	%		We will indicate this structure by $(\mathrm{Attr}(\Phi) , \star , \| \cdot \| , \ast)$.
\end{thm}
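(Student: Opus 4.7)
The plan is to recognize that the peripheral projection $\mathcal{P}_{\mathrm{P}}$ is a UCP idempotent with range $\mathrm{Attr}(\Phi)$, and then invoke the original Choi-Effros theorem for UCP projections~\cite{choi1977injectivity}, which asserts that the range of any such $\mathcal{E}$ becomes a unital $C^*$-algebra when equipped with the product $(X,Y)\mapsto \mathcal{E}(XY)$, the native involution and operator norm, and unit $\mathcal{E}(\mathbb{I})$. Applied to $\mathcal{E}=\mathcal{P}_{\mathrm{P}}$, this directly delivers the desired structure on $\mathrm{Attr}(\Phi)$.

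Two prerequisites need to be checked. The UCP property of $\mathcal{P}_{\mathrm{P}}$ is already noted after~\eqref{P_p_for} and follows from UCP maps forming a closed convex semigroup, so the pointwise limit $\mathcal{P}_{\mathrm{P}}=\lim_i \Phi^{n_i}$ remains UCP. Idempotency $\mathcal{P}_{\mathrm{P}}^2=\mathcal{P}_{\mathrm{P}}$ is a spectral statement: by~\eqref{P_p_def}, $\mathcal{P}_{\mathrm{P}}$ is a sum of mutually orthogonal spectral projections onto the generalized peripheral eigenspaces, and those eigenspaces are in fact semisimple because any nontrivial Jordan block at a unimodular eigenvalue would force $\|\Phi^n\|$ to grow polynomially, contradicting the contraction bound $\|\Phi^n\|\leq 1$ enjoyed by UCP maps. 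The range identification $\mathrm{ran}(\mathcal{P}_{\mathrm{P}})=\mathrm{Attr}(\Phi)$ is immediate from~\eqref{attr_def}, and $\mathcal{P}_{\mathrm{P}}(\mathbb{I})=\mathbb{I}$ by unitality of $\Phi$.

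The main non-trivial content of the Choi-Effros argument is the associativity of $\star$, which rests on the inclusion of $\mathrm{ran}(\mathcal{P}_{\mathrm{P}})$ in the multiplicative domain of $\mathcal{P}_{\mathrm{P}}$, i.e.\ $\mathcal{P}_{\mathrm{P}}(X^*X)=X^*X$ for every $X\in\mathrm{Attr}(\Phi)$. The Schwarz inequality supplies one direction, $\mathcal{P}_{\mathrm{P}}(X^*X)\geq \mathcal{P}_{\mathrm{P}}(X)^*\mathcal{P}_{\mathrm{P}}(X)=X^*X$, while the reverse follows by pairing the non-negative difference with a maximum-rank stationary state $\omega$ of $\Phi^\dagger$: since $\mathcal{P}_{\mathrm{P}}^\dagger(\omega)=\omega$, one has $\tr(\omega(\mathcal{P}_{\mathrm{P}}(X^*X)-X^*X))=0$, and faithfulness of $\omega$ on $\mathcal{H}_0$ forces the equality. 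Once the multiplicative-domain inclusion is available, associativity reads $\mathcal{P}_{\mathrm{P}}(XYZ)=\mathcal{P}_{\mathrm{P}}(X\,\mathcal{P}_{\mathrm{P}}(YZ))=\mathcal{P}_{\mathrm{P}}(\mathcal{P}_{\mathrm{P}}(XY)\,Z)$, the $*$-closure $(X\star Y)^*=Y^*\star X^*$ follows from the Hermiticity preservation of $\mathcal{P}_{\mathrm{P}}$, and the $C^*$-identity $\|X^*\star X\|=\|X^*X\|=\|X\|^2$ reduces to the native $C^*$-identity on $\mathcal{B}(\mathcal{H})$. The main obstacle is thus packaged inside Choi-Effros, specifically the multiplicative-domain inclusion; notably this step uses only the Schwarz inequality, which anticipates the extension to Schwarz maps in Section~\ref{Schw_CP}.
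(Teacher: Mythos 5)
Your overall strategy coincides with the paper's: the paper does not reprove this statement but simply observes that $\mathcal{P}_{\mathrm{P}}$ is a UCP idempotent with range $\mathrm{Attr}(\Phi)$ and invokes the Choi--Effros theorem (via~\cite[Theorem~2.3]{rajarama2022peripheral} and~\cite[Theorem~3.1]{choi1977injectivity}). Your verification of the prerequisites (UCP via the closed convex semigroup property, semisimplicity of the peripheral eigenvalues from power-boundedness, idempotency, unitality) is fine. The problem lies in the internal mechanism you supply for associativity.

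The claimed ``multiplicative-domain inclusion'' $\mathcal{P}_{\mathrm{P}}(X^\ast X)=X^\ast X$ for every $X\in\mathrm{Attr}(\Phi)$ is false in general: it is precisely the statement that $\mathrm{Attr}(\Phi)$ lies in the multiplicative domain of $\mathcal{P}_{\mathrm{P}}$, which forces $X\star Y=XY$ on $\mathrm{Attr}(\Phi)$, i.e.\ that $\Phi$ is peripherally automorphic --- a special property the paper explicitly singles out, and which by Theorem~\ref{struc_Heis_th} and the unfolding theorem~\ref{th:unfolding_thm} holds only when the arbitrary UCP map $\mathcal{P}_{11}$ happens to be a $\ast$-morphism. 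Concretely, take $\mathcal H=\mathbb C^2\oplus\mathbb C$, $\mathcal P_{00}=\mathrm{id}$, $\mathcal P_{11}(x)=\langle\psi|x|\psi\rangle$, and $X=|\phi\rangle\langle\phi|\oplus\tfrac12$ with $|\langle\psi|\phi\rangle|^2=\tfrac12$; then $X^\ast X=|\phi\rangle\langle\phi|\oplus\tfrac14$ while $\mathcal{P}_{\mathrm{P}}(X^\ast X)=|\phi\rangle\langle\phi|\oplus\tfrac12$. Your trace argument does not detect this because the maximum-rank stationary state $\omega$ is supported on $\mathcal H_0$ only, so $\tr\bigl(\omega(\mathcal{P}_{\mathrm{P}}(X^\ast X)-X^\ast X)\bigr)=0$ merely forces the nonnegative discrepancy to live in $0\oplus\mathcal B(\mathcal H_1)$ --- exactly where it does live in the example. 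The actual key lemma behind Choi--Effros is the weaker conditional-expectation identity $\mathcal{P}_{\mathrm{P}}\bigl(A\,\mathcal{P}_{\mathrm{P}}(Y)\bigr)=\mathcal{P}_{\mathrm{P}}(AY)$ and $\mathcal{P}_{\mathrm{P}}\bigl(\mathcal{P}_{\mathrm{P}}(Y)A\bigr)=\mathcal{P}_{\mathrm{P}}(YA)$ for $A\in\mathrm{ran}(\mathcal{P}_{\mathrm{P}})$ and arbitrary $Y$ (proved by a $2\times2$-matrix/polarization argument from the Schwarz inequality and idempotency), from which associativity follows as $(X\star Y)\star Z=\mathcal{P}_{\mathrm{P}}(XYZ)=X\star(Y\star Z)$ without ever asserting $\mathcal{P}_{\mathrm{P}}(XY)=XY$. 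Your derivation of the $C^\ast$-identity also leans on the false equality, though it can be repaired using only $\mathcal{P}_{\mathrm{P}}(X^\ast X)\geqslant X^\ast X$ together with contractivity of $\mathcal{P}_{\mathrm{P}}$.
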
 

This theorem  makes the Choi-Effros product~\eqref{star_prod_H} the right candidate for describing the long-term behavior of an open quantum system in the Heisenberg scheme~\cite{AFK_asympt3}. Moreover, if $\Phi$ is faithful, then the Choi-Effros product $\star$ reduces to the usual composition product $\cdot$, and the algebraic structure of $\mathrm{Attr}(\Phi)$ is just given by equation~\eqref{eq:algebra_structure_attractor}. We will see in Section~\ref{struc_attr_Heis_sec} how to generalize such result to the non-faithful case.

%		Explicitly, the attractor subspace $\mathrm{Attr}(\Phi)$ of $\Phi$ in the faithful case takes the form
%		\begin{equation}
	%			\label{faith_attr_sub}
	%			\mathrm{Attr}(\Phi)= \bigoplus_{k=1}^M \mathcal{B}(\mathcal{H}_{k,1}) \otimes \mathbb{I}_{k,2},
	%		\end{equation}
%		corresponding to a decomposition of the Hilbert space $\mathcal{H}$
%		\begin{equation}
	%			\mathcal H =  \bigoplus_{k=1}^{M} \mathcal{H}_{k,1}\otimes \mathcal{H}_{k,2},
	%		\end{equation}
%		with $\mathcal{H}_{k,i}$ Hilbert spaces ($k=1,\dots, M$ and $i=1,2$) and $\mathbb{I}_{k,2}$ identity on $\mathcal{H}_{k,2}$. 

%generated by~\ref{struc_attr_Heis_sec} we will manage to reduce the general problem of the structure of the attractor $\mathrm{Attr}(\Phi)$ of $\Phi$ to the faithful case by considering the faithful counterpart of the map $\Phi$, as it was done in the Schr\"odinger picture, see e.g. equation \eqref{Phi_tilde_def}. 

In general, faithfulness is a sufficient but not necessary condition for the map $\Phi$ to be a $\ast$-morphism on the $C^\ast$-algebra $(\mathrm{Attr}(\Phi) , \cdot , \| \cdot \| , \ast)$. For this reason one introduces the class of peripherally automorphic maps~\cite{rajaramaperipheral_2}.

\begin{defn}
	Let $\Phi$ be a UCP map. Then $\Phi$ is called \emph{peripherally automorphic} iff
	\begin{equation}
		\label{per_aut_eq}
		X \star Y = XY, \quad \text{for all } X,Y \in \mathrm{Attr}(\Phi).
	\end{equation}
\end{defn} 	

%Clearly any faithful UCP map is peripherally automorphic, but the converse is not generally valid \cite[Example 2.4]{rajaramaperipheral_2}. Let us also note that \eqref{per_aut_eq} is equivalent to require that $(\mathrm{Attr}(\Phi) , \cdot , \| \cdot \| , \ast)$ is a $C^\ast$-algebra, i.e. $\mathrm{Attr}(\Phi)$ is closed under the composition product $\cdot$.   

%The $\Rightarrow)$ implication is immediate. If $\rho$ is an invertible invariant state of $\Phi^\dagger$, then
%\begin{equation}
%\Phi^\dagger(\rho)=\rho \Rightarrow \rho \in \mathrm{Fix}(\Phi^\dagger)\subseteq \mathrm{Attr}(\Phi^\dagger) \Rightarrow \mathcal{P}_{\mathrm{P}}^\dagger(\rho)=\rho. 
%\end{equation}
% recall that~\cite[Proposition 6.9]{wolf2012quantum},
%\begin{equation}
%X \in \mathrm{Fix}(\Phi^\dagger) \Rightarrow \mathrm{supp}(X) \subseteq \mathrm{supp}(\mathcal{P}^\dagger(\mathbb{I})),
%\end{equation}
%where $\mathrm{supp}(A)$ denotes the support of $A\in \mathcal{B}(\mathcal{H})$ and $\mathcal{P}^\dagger$ the projection of $\Phi^\dagger$ corresponding to $\lambda=1$. So, by applying this result to $\mathcal{P}_{\mathrm{P}}^\dagger$, we obtain that by hypothesis

Motivated by Theorem~\ref{th:C_star_UCP}, namely by the fact that the Choi-Effros product~\eqref{star_prod_H} is the suitable one for the study of Heisenberg asymptotics, we consider the space
\begin{align}
	\mathcal{N}_{\star}:= \{ X \in \mathcal{B}(\mathcal{H}) \,\vert\, \Phi^{n}(Y \star X) &= \Phi^{n}(Y)\star \Phi^{n}(X),\nonumber\\ \, \Phi^{n}(X \star Y) &= \Phi^{n}(X) \star \Phi^{n}(Y), \, \forall Y \in \mathcal{B}(\mathcal{H}),\forall n \in \mathbb{N} \},
	\label{star_dec_def}
\end{align}
which we refer to as \emph{Choi-Effros decoherence-free algebra of $\Phi$}~\cite{AFK_asympt3}.\ This generalizes the standard decoherence-free algebra, see e.g.~\cite{carbone2020period}. 

It turns out that $(\mathcal{N}_{\star} , \cdot , \| \cdot \| , \ast)$ is a $C^\ast$-algebra and it always contains the attractor subspace $\mathrm{Attr}(\Phi)$ of $\Phi$, at variance with the standard decoherence-free algebra~\cite{rajaramaperipheral_2}. A decomposition theorem for the Choi-Effros decoherence-free algebra will be recalled in Section~\ref{struc_DFA_sec} (Theorem~\ref{sum_dec_N_th}) and rewritten in terms of the decomposition~\eqref{H_dec_supp} of $\mathcal{H}$ (Theorem~\ref{dec_N_H01_th}). For further details on the Choi-Effros decoherence-free algebra see~\cite{AFK_asympt3}.

\section{Peripheral projection and attractor subspace}
\label{struc_attr_Heis_sec} 

In this section we will find the explicit structure of the attractor subspace $\mathrm{Attr}(\Phi)$ of the evolution in the Heisenberg picture, in analogy with the decomposition~\eqref{attr_struc_Schr} valid in the Schr\"odinger picture. For this purpose, we need to understand the structure of the peripheral projection $\mathcal{P}_{\mathrm{P}}$ of a generic UCP map $\Phi$.

First, it is useful to consider again the splitting~\eqref{H_dec_supp} of the Hilbert space:
\begin{equation}
	\label{eq:decomp_hilbert}
	\mathcal H=\mathcal H_0\oplus \mathcal H_1.
\end{equation}
Let $Q_0$ and $Q_1$ be the projections onto $\mathcal H_0$ and $\mathcal{H}_1$, respectively, and for any $X \in\mathcal B(\mathcal H)$ consider the operators $X_{ij}=Q_{i}X Q_{j}$. The operator $X$ can be written in a block-matrix form as
\begin{equation}
	X=
	\begin{pmatrix}
		X_{00} & X_{01}\\	
		X_{10} & X_{11}
	\end{pmatrix}.
\end{equation}

It is well known~\cite{jex_st_2012,wolf2012quantum} that it is possible to obtain a faithful channel by restricting $\Phi^\dagger$ to $\mathcal B(\mathcal H_0)$.  To be more precise, this subspace is invariant under $\Phi^\dagger$, and we can define 
${\varphi_{00}}:\mathcal{B}(\mathcal{H}_0) \mapsto \mathcal{B}(\mathcal{H}_{0})$ as
\begin{equation}
	X=
	\begin{pmatrix}
		X_{00} & 0\\	
		0 & 0
	\end{pmatrix}
	\mapsto
	\Phi^\dagger(X)=
	\begin{pmatrix}
		\varphi_{00}(X_{00}) & 0\\	
		0 & 0
	\end{pmatrix}
\end{equation}
It turns out that $\varphi_{00}$ is a faithful CPTP map on $\mathcal{B}(\mathcal{H}_0)$ and we will call it the \emph{reduced quantum channel} of $\Phi^\dagger$.
There exists a simple relationship between the fixed point spaces and the attractor subspaces of $\Phi^\dagger$ and $\varphi_{00}$, namely
\begin{equation}
	\label{Fix_Phi_Phi_tilde}
	\mathrm{Fix}(\Phi^\dagger) =
	\begin{pmatrix}
		\mathrm{Fix}(\varphi_{00}) & 0\\	
		0 & 0
	\end{pmatrix},\quad
	\mathrm{Attr}(\Phi^\dagger)=  
	\begin{pmatrix}
		\mathrm{Attr}(\varphi_{00}) & 0\\	
		0 & 0
	\end{pmatrix}.
\end{equation}

As a consequence of the structure of the Schr\"{o}dinger attractor subspace \eqref{attr_struc_Schr}, the peripheral projection $\mathcal{P}_{\mathrm{P}}^\dagger$ of %the Hilbert-Schmidt adjoint 
$\Phi^\dagger$ %of $\Phi$ 
reads
\begin{equation}
	\label{P_P_dagger_struc}
	\mathcal{P}_{\mathrm{P}}^\dagger(X)= 
	\begin{pmatrix}
		\sum_{i,j=0,1}\mathcal Q_{ij}(X_{ji})&0\\
		0 & 0
	\end{pmatrix}
\end{equation} 
Here, $\mathcal{Q}_{ij}: \mathcal{B}(\mathcal{H}_{i} , \mathcal{H}_{j}) \to \mathcal{B}(\mathcal{H}_0)$ are linear maps. In particular, $\mathcal Q_{00}$ is the peripheral projection of the faithful channel $\varphi_{00}$, and its explicit form is provided by~\eqref{eq:projection_faithful_case_shrodinger}.  In line with~\eqref{P_P_dagger_struc} and following the quantum probability jargon~\cite{gartner2012coherent}, we can call $\mathcal{B}(\mathcal{H}_0)$ and $\mathcal{B}(\mathcal{H}_{1})$ the \emph{recurrent} and \emph{transient algebras} of the evolution.

We can now consider the peripheral projection in the Heisenberg scheme $\mathcal{P}_{\mathrm{P}}$, whose action can be expressed as
\begin{equation}
	X=
	\begin{pmatrix}
		X_{00} & X_{01}\\
		X_{10} & X_{11}
	\end{pmatrix}\mapsto
	\mathcal{P}_{\mathrm{P}}(X)=\begin{pmatrix}
		\mathcal{Q}_{00}^\dagger(X_{00}) & \mathcal{Q}_{10}^\dagger (X_{00}) \\
		\mathcal{Q}_{01}^\dagger(X_{00}) & \mathcal{Q}_{11}^\dagger (X_{00})
	\end{pmatrix}.\label{eq:asympttic_map_heisenberg_block}
\end{equation}
Here, the adjoint maps $\mathcal{Q}_{ij}^\dagger : \mathcal{B}(\mathcal{H}_{0}) \to \mathcal{B}( \mathcal{H}_{i} , \mathcal{H}_{j})$ are again defined with respect to the Hilbert-Schmidt inner products~\eqref{eq:hilbert_schmidt_inner_product} on $\mathcal{B}(\mathcal{H}_{0})$ and $\mathcal{B}( \mathcal{H}_{i} , \mathcal{H}_{j})$. Henceforth the matrix components of $\mathcal{P}_{\mathrm{P}}$ will be denoted by $\mathcal{P}_{ij}=\mathcal{Q}_{ji}^\dagger$.

Notice that the action of the peripheral projection on $X$ only depends on $X_{00}=Q_0XQ_0$. Moreover, notice that $\mathcal P_{00}$ will be given by equation~\eqref{eq:projection_faithful_case_heisenberg}. Indeed,
%In particular, the subspace $\mathcal H_0$ admits a decomposition in the form~\eqref{eq:decomposition_hilbert_faithful}.  
it is the peripheral projection of the faithful \emph{reduced UCP map} $\phi_{00}=\varphi_{00}^\dagger:\mathcal B(\mathcal H_0)\rightarrow \mathcal B(\mathcal H_0)$, so its range, i.e.\ the attractor subspace of $\phi_{00}$, is the $C^\ast$-algebra

\begin{equation}
	\mathfrak A=\bigoplus_{k=1}^M \mathcal B(\mathcal H_{k,1})\otimes \mathbb{C}\mathbb I_{k,2}\subset \mathcal B(\mathcal H_0),
\end{equation}
and the action of $\mathcal P_{00}$ on $X_{00} \in \mathcal B(\mathcal H)$ is given by
\begin{equation}
	\label{eq:asympttic_map_heisenberg_00}
	\mathcal P_{00}(X_{00})=\bigoplus_{k=1}^M \mathrm{tr}_{k,2}\Bigl((\mathbb I_{k,1}\otimes \sqrt{\rho_k})P_kX_{00}P_k(\mathbb I_{k,1}\otimes \sqrt{\rho_k})\Bigr) \otimes{ \mathbb I_{k,2}}.
\end{equation}

%\begin{equation}
%\mathcal{P}_{\mathrm{P}}(X)=\begin{pmatrix}
	%\mathcal{P}_{00}(X_{00}) & \mathcal{P}_{01} (X_{00}) \\
	%\mathcal{P}_{10}(X_{00}) & \mathcal{P}_{11} (X_{00})
	%\end{pmatrix},
	%\end{equation} 
	%with $\mathcal{P}_{00} = \mathcal{P}_{P , \tilde{\Phi}}$. Observe that $\tilde{\Phi}$ is a faithful UCP map on $\mathcal{B}(\mathcal{H}_{0})$ which we can term the \textit{reduced map} of $\Phi$. 
	%Notice that until now we have \textit{only} used the fact that $\Phi$ is a unital positive map, since the limit formulas \eqref{P_for} and \eqref{P_p_for} as well as Eqs. \eqref{supp_fix}-\eqref{attr_struc_Schr} hold for all such maps . 
	
	Now we are going to show that the expression for the asymptotic projection~\eqref{eq:asympttic_map_heisenberg_block} can be simplified. Remember that a UCP map satisfies the operator Schwarz inequality~\cite[Proposition 3.3]{paulsen2002}
	\begin{equation}
		\label{eq:op_Schw_ineq}
		\Phi(X^*X)\geqslant \Phi(X)^*\Phi(X), \quad \mbox{ for all } X \in \mathcal{B}(\mathcal{H}).
	\end{equation}
	We can exploit the latter in order to prove that $\mathcal{P}_{01}=\mathcal{P}_{10}=0$. Specifically, let us take $X=X_{00} \oplus 0$ with $X_{00}\in \mathfrak A$ and apply~\eqref{eq:op_Schw_ineq} to $\mathcal P_{\mathrm{P}}$ which is Schwarz since $\Phi$ is:
	\begin{equation}
		\label{eq:schrarz_periph}
		\mathcal{P}_{\mathrm{P}}(X_{00}^\ast X_{00} \oplus 0) \geqslant { \mathcal{P}_{\mathrm{P}}(X_{00}^\ast\oplus 0)\mathcal{P}_{\mathrm{P}}(X_{00}\oplus 0)=} \mathcal{P}_{\mathrm{P}}(X_{00}\oplus 0)^\ast\mathcal{P}_{\mathrm{P}}(X_{00}\oplus 0).
	\end{equation} 
	Using the block matrix expression~\eqref{eq:asympttic_map_heisenberg_block}, this reads
	\begin{align}
		&\begin{pmatrix}
			\mathcal{P}_{00}(X_{00}^*X_{00}) & \mathcal{P}_{01} (X_{00}^*X_{00}) \\
			\mathcal{P}_{10}(X_{00}^*X_{00}) & \mathcal{P}_{11} (X_{00}^*X_{00})
		\end{pmatrix}\geqslant\nonumber\\
		&\qquad\qquad
		\begin{pmatrix}
			\mathcal{P}_{00}(X_{00})^* & \mathcal{P}_{10} (X_{00})^* \\
			\mathcal{P}_{01}(X_{00})^* & \mathcal{P}_{11} (X_{00})^*
		\end{pmatrix}
		\begin{pmatrix}
			\mathcal{P}_{00}(X_{00}) & \mathcal{P}_{01} (X_{00}) \\
			\mathcal{P}_{10}(X_{00}) & \mathcal{P}_{11} (X_{00})
		\end{pmatrix}
	\end{align}
	From the upper-left block of the latter operator inequality we obtain
	\begin{equation}
		\mathcal{P}_{00}(X_{00}^\ast X_{00}) \geqslant \mathcal{P}_{00}(X_{00}^\ast)\mathcal{P}_{00}(X_{00}) + \mathcal{P}_{10}(X_{00})^\ast\mathcal{P}_{10}(X_{00}).
	\end{equation}
	Here, we used the fact that the operator $\mathcal P_{00}$ is Hermiticity preserving, so that $\mathcal P_{00}(X_{00}^\ast)=\mathcal P_{00}(X_{00})^\ast$. Moreover, since $\mathfrak A$ is a $\ast$-algebra, we have that $X_{00}^\ast X_{00} \in \mathfrak A$, namely
	\begin{equation}
		\mathcal P_{00}(X_{00})=X_{00}, \quad \mathcal P_{00}(X_{00}^\ast X_{00})=X_{00}^\ast X_{00}.
	\end{equation}
	As a result, we obtain
	\begin{equation}
		\label{eq:condition_off_channel}
		\mathcal{P}_{10}(X_{00})=0, \quad \forall X_{00} \in \mathfrak A.
	\end{equation}
	The idempotent condition $\mathcal{P}_{\mathrm{P}}^2=\mathcal{P}_{\mathrm{P}}$ implies
	\begin{equation}
		\label{idem_cond}
		\mathcal{P}_{ij}\mathcal{P}_{00}= \mathcal{P}_{ij}, \quad i,j=0,1,
	\end{equation}
	and thus $\mathcal{P}_{10}=0$. Analogously, by using the operator Schwarz inequality for $X^\ast_{00} \oplus 0$ and equation~\eqref{idem_cond}, we obtain $\mathcal{P}_{01}=0$. 
	
	To sum up, the peripheral projection $\mathcal{P}_{\mathrm{P}}$ takes the form
	\begin{equation}
		\label{eq:peripheral_projection}
		\mathcal{B}(\mathcal{H}) \ni X=
		\begin{pmatrix}
			X_{00} & X_{01}\\
			X_{10} & X_{11}
		\end{pmatrix}
		\mapsto
		\mathcal{P}_{\mathrm{P}}(X)=\begin{pmatrix}
			\mathcal{P}_{00}(X_{00}) & 0 \\
			0 & \mathcal{P}_{11} (X_{00})
		\end{pmatrix},
	\end{equation}
	with $\mathcal{P}_{00}$ being given by equation~\eqref{eq:asympttic_map_heisenberg_00} and, consequently, an idempotent UCP map. 
	Instead, $\mathcal{P}_{11}$ is required to be a UCP map with $\mathcal{P}_{11} \mathcal{P}_{00} = \mathcal{P}_{11}$, implying that $ \mathcal{P}_{11} \vert_{ \mathfrak A } $ is a generic UCP map. 
	Indeed, we will realize with the aid of the unfolding theorem~\ref{th:unfolding_thm} that there are no additional constraints on $\mathcal{P}_{11} \vert_{ \mathfrak{A} }$. See Remark~\ref{constraint_UCP_P_rmk}.
	% In other words, it can be any UCP channel from $\mathfrak A$ to $\mathcal{B}(\mathcal{H}_1)$.
	
	Equation~\eqref{eq:peripheral_projection} was already obtained in~\cite[Theorem 7.1]{gartner2012coherent} for any normal idempotent UCP map on a von Neumann algebra $\mathcal{A}$, for which it is still possible to define recurrence and transience subalgebras as in the simple case $\mathcal{A}=\mathcal{B}(\mathcal{H})$. Note that the definition of faithfulness adopted in~\cite[Theorem 7.1]{gartner2012coherent}, provided just before the claim, is generally weaker than the one discussed in Definition~\ref{faith_def}. However, for idempotent positive unital maps such as $\mathcal{P}_{00}$, the two notions are equivalent; see the proof of Lemma~\ref{lemma_Schw_id_faith} and Remark~\ref{converse-faith-math-rem}. Also, we derived the Heisenberg analogue of equation~(22) of~\cite{albert2019asymptotics}. 
	
	As we shall see in Section~\ref{Schw_CP}, and as already clear in the proof, almost all features of this structure rely on the operator Schwarz inequality~\eqref{eq:op_Schw_ineq}.
	
	We summarize the results of the previous discussion in the following theorem.
	
	\begin{thm}
		\label{struc_Heis_th}
		Let $\Phi:\mathcal B(\mathcal H)\rightarrow \mathcal B(\mathcal H)$ be a UCP map. Then, it is possible to find a decomposition of the Hilbert space
		\begin{equation}
			\mathcal H=\left(\bigoplus_{k=1}^M \mathcal H_{k,1}\otimes \mathcal H_{k,2}\right)\oplus \mathcal H_1 ,
		\end{equation}
		with $\mathcal H_1=\ker ( \mathcal P_{\mathrm{P}}^\dagger(\mathbb I))$, and a UCP map
		\begin{equation}
			\label{eq:unital_map_attractor}
			\mathcal P_{11}:\mathfrak A 
			\rightarrow \mathcal B(\mathcal H_1), \qquad \text{with}\quad \mathfrak A =\bigoplus_{k=1}^M\mathcal B(\mathcal H_{k,1})\otimes \mathbb{C} \mathbb I_{k,2},
		\end{equation}
		so that the attractor subspace of $\Phi$ reads
		\begin{equation}
			\label{struc_attr_Heis}
			\mathrm{Attr}(\Phi)=\{X\oplus \mathcal P_{11}(X)|X\in\mathfrak A\}.
		\end{equation}
		The action of $\Phi$ on an element $X\oplus \mathcal{P}_{11}(X)$ of the attractor subspace reads
		\begin{equation}
			\label{as-struc_Heis}
			\Phi(X\oplus\mathcal P_{11}(X))=\phi_{00}(X) \oplus (\mathcal P_{11}\circ\phi_{00})(X).
		\end{equation}
		Here $X=\bigoplus_{k=1}^M x_k\otimes \mathbb I_{k,2}\in\mathfrak A$, and $\phi_{00}$ is defined as
		\begin{equation}
			\phi_{00}(X)=\bigoplus_{k=1}^M U_{\pi(k)}^\ast x_{\pi(k)}U_{\pi(k)} \otimes\mathbb I_{k,2},
		\end{equation}
		where $U_k$ are unitary operators on $\mathcal{B}(\mathcal{H}_{k,1})$, and $\pi$ is a permutation of the set $\{ 1, \dots , M \}$ with $d_k := \dim(\mathcal{H}_{k,1}) = d_{\pi(k)}$ for any $k=1, \dots , M$.
	\end{thm}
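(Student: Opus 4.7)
The plan is to assemble results already established earlier in the discussion and to check that everything glues together. Starting from the Schr\"odinger side, I would apply Theorem~\ref{Wolf_thm} to $\Phi^\dagger$ to obtain the splitting $\mathcal H = \mathcal H_0 \oplus \mathcal H_1$ with $\mathcal H_0 = \bigoplus_k \mathcal H_{k,1}\otimes\mathcal H_{k,2}$ and $\mathcal H_1 = \ker(\mathcal P_{\mathrm{P}}^\dagger(\mathbb I))$, and to identify the reduced channel $\varphi_{00}=\Phi^\dagger|_{\mathcal B(\mathcal H_0)}$ as faithful with peripheral projection given by~\eqref{eq:projection_faithful_case_shrodinger}. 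Dualizing yields the faithful Heisenberg reduced map $\phi_{00}=\varphi_{00}^\dagger$ whose attractor is the $C^\ast$-algebra $\mathfrak A = \bigoplus_k \mathcal B(\mathcal H_{k,1})\otimes \mathbb C\mathbb I_{k,2}$ and whose action on $\mathfrak A$ is the permutation-plus-conjugation formula~\eqref{eq:Phifaith}; this takes care of the structure of $\mathcal H$, the definition of $\mathfrak A$, and the explicit form of $\phi_{00}$.

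Next I would lift this to the full UCP map $\Phi$. The key general fact is that the invariance of $\mathcal B(\mathcal H_0)$ under $\Phi^\dagger$ implies, by Hilbert--Schmidt duality, that the upper-left block of $\Phi(X)$ depends only on $X_{00}$ and coincides with $\phi_{00}(X_{00})$. Taking the adjoint of the block decomposition~\eqref{P_P_dagger_struc} then delivers the Heisenberg peripheral projection in the form~\eqref{eq:asympttic_map_heisenberg_block}, with $\mathcal P_{00}$ equal to the peripheral projection of $\phi_{00}$ displayed in~\eqref{eq:asympttic_map_heisenberg_00}.

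The central technical step, and what I expect would be the principal obstacle if one worked from scratch, is to eliminate the off-diagonal blocks $\mathcal P_{01}$ and $\mathcal P_{10}$. I would apply the operator Schwarz inequality~\eqref{eq:op_Schw_ineq} for the UCP map $\mathcal P_{\mathrm{P}}$ to $X_{00}\oplus 0$ with $X_{00}\in\mathfrak A$. Because $\mathfrak A$ is a $\ast$-subalgebra on which $\mathcal P_{00}$ acts as the identity, the upper-left entry of the resulting block inequality forces $\mathcal P_{10}(X_{00})^\ast\mathcal P_{10}(X_{00})\leqslant 0$, hence $\mathcal P_{10}|_{\mathfrak A}=0$; combining this with the idempotency identity $\mathcal P_{ij}\mathcal P_{00}=\mathcal P_{ij}$ stemming from $\mathcal P_{\mathrm{P}}^2=\mathcal P_{\mathrm{P}}$ promotes it to $\mathcal P_{10}\equiv 0$ on all of $\mathcal B(\mathcal H_0)$. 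A symmetric argument applied to $X_{00}^\ast\oplus 0$ yields $\mathcal P_{01}\equiv 0$, leaving the diagonal form~\eqref{eq:peripheral_projection}. Reading off the range of this projection then gives $\mathrm{Attr}(\Phi)=\{X\oplus\mathcal P_{11}(X)\mid X\in\mathfrak A\}$, establishing~\eqref{struc_attr_Heis}.

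Finally, for the asymptotic action~\eqref{as-struc_Heis}, I would use that $\Phi$ preserves $\mathrm{Attr}(\Phi)$, so $\Phi(X\oplus\mathcal P_{11}(X))$ must again have the form $Y\oplus\mathcal P_{11}(Y)$ with $Y\in\mathfrak A$. The upper-left block identification from the second paragraph then fixes $Y=\phi_{00}(X)$, and substituting the explicit permutation-and-unitary formula~\eqref{eq:Phifaith} for $\phi_{00}|_{\mathfrak A}$ produces the claimed expression with $\pi$ and $U_k$, satisfying $d_k=d_{\pi(k)}$ as inherited from Theorem~\ref{Wolf_thm}.
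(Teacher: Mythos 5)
Your proposal is correct and follows essentially the same route as the paper: Theorem~\ref{Wolf_thm} for the Hilbert-space splitting and the reduced faithful map, the adjoint of the block form~\eqref{P_P_dagger_struc} for $\mathcal P_{\mathrm{P}}$, the operator Schwarz inequality applied to $X_{00}\oplus 0$ with $X_{00}\in\mathfrak A$ together with the idempotency relation $\mathcal P_{ij}\mathcal P_{00}=\mathcal P_{ij}$ to kill the off-diagonal blocks, and Hilbert--Schmidt duality to identify the action as $\phi_{00}$. The only cosmetic difference is that you phrase the final step via the invariance of $\mathrm{Attr}(\Phi)$ and the upper-left block identification, whereas the paper carries out the corresponding trace computation explicitly in Subsection~\ref{sec:attr_schrodinger}; the content is the same.
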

Note that Heisenberg's attractor subspace possesses a block diagonal structure with respect to the decomposition~\eqref{H_dec_supp}, exactly as the attractor subspace $\mathrm{Attr}(\Phi^\dagger)$ of the Schr\"odinger dynamics $\Phi^\dagger$. However, the non-faithfulness of $\Phi$ implies the appearance of an additional non-zero operator on $\mathcal{H}_{1}$, whereas in the Schr\"odinger analogue~\eqref{attr_struc_Schr} this operator is zero, implying an easier structure for $\mathrm{Attr}(\Phi^\dagger)$. Let us revisit Example~\ref{ex:amplit_damp} in the Heisenberg picture in order to illustrate Theorem~\ref{struc_Heis_th}.

\begin{exmp}
    The action of the amplitude damping channel~\eqref{eq:amplit_damp} in the Heisenberg picture is given by the UCP map
    \begin{equation}\label{eq:amplit_damp_1}
        X=\begin{pmatrix}
            x_{00}&x_{01}\\
            x_{10}&x_{11}
        \end{pmatrix}\mapsto
        \Phi(X)=\begin{pmatrix}
            x_{00}&\frac{1}{2}x_{01}\\
            \frac{1}{2}x_{10}&\frac{3}{4}x_{00}+\frac{1}{4}x_{11}
        \end{pmatrix},
    \end{equation}
with peripheral projection
\begin{equation}
\mathcal P_{\mathrm P}(X)=
    \begin{pmatrix}
        x_{00}&0\\
        0&x_{00}
    \end{pmatrix}.
\end{equation}
This example shows the difference between the Schr{\"o}dinger and Heisenberg asymptotic subspaces. In fact, the asymptotic observables have a non-zero residual component supported on the subspace $\mathbb{C}|{1}\rangle$, at variance with the asymptotic states, cf.\ equation~\eqref{per_proj_ampl_Schr}. 
\end{exmp}

	We conclude this section with the following result, which states that any dynamics of the form~\eqref{as-struc_Heis} can be obtained as the asymptotics of a proper UCP map. This result represents the analogue of the unfolding theorem discussed in the Schr\"odinger picture in~\cite[Theorem 4.1]{AFK_asympt_2}.
	
	\begin{thm}[Unfolding theorem]
		\label{th:unfolding_thm}
		Let $\mathcal H$ be a $d$-dimensional Hilbert space, and decompose it in the form
		\begin{equation}
			\mathcal H = \mathcal H_0\oplus \mathcal H_1,\qquad \mathcal H_0=\bigoplus_{k=1}^M \mathcal H_{k,1}\otimes \mathcal H_{k,2}.
		\end{equation}
		Call $d_k = \dim (\mathcal H_{k,1})$ the dimensions of $\mathcal H_{k,1}$. Let  $\mathcal P$ be a UCP map,
		\begin{equation}
			\mathcal P:\mathfrak A\rightarrow \mathcal B(\mathcal H_1), \qquad \text{with}\quad \mathfrak A=\bigoplus_{k=1}^M \mathcal B(\mathcal H_{k,1})\otimes \mathbb{C} \mathbb I_{k,2} \subset \mathcal B(\mathcal H_{0}).
		\end{equation}
		Consider the map
		\begin{equation}
			\Phi_{00}:\mathfrak A\rightarrow \mathfrak A,
		\end{equation}
		defined as
		\begin{equation}
			X=\bigoplus_{k=1}^M x_k \otimes \mathbb I_{k,2}\mapsto\Phi_{00}(X)=\bigoplus_{k=1}^M U^\ast_{\pi(k)}x_{\pi(k)}U_{\pi(k)}\otimes \mathbb I_{k,2},
			\label{eq:Phi00def}
		\end{equation}
		where $\pi$ be a permutation of $\{1,\dots,M\}$ such that $d_k=d_{\pi(k)}$, and   $U_k$ is a unitary in $\mathcal B(\mathcal H_{k,1})$ for all $k$.
		
		Consider the subspace of $\mathcal B(\mathcal H)$
		\begin{equation}
			\mathcal K=\{X\oplus \mathcal P(X)| X\in\mathfrak A\}.
		\end{equation}
		and let $\Phi_{\mathcal K}:\mathcal K\rightarrow \mathcal K$ be the map given by 
		%		\begin{equation}
			%			\Phi_{\mathcal K}(X_{00}\oplus \mathcal P(X_{00}))=\Phi_{00}(X_{00})\oplus \mathcal P(\Phi_{00}(X_{00})). 
			%		\end{equation}
		\begin{equation}
			\Phi_{\mathcal K}(X\oplus \mathcal P(X))=\Phi_{00}(X)\oplus \mathcal P(\Phi_{00}(X)). 
		\end{equation}
		
		Then, there exists a UCP map $\Phi_{\mathrm E}$ on $\mathcal B(\mathcal H)$ such that
		\begin{itemize}
			\item[]$a.$ $\mathrm{Attr}(\Phi_{\mathrm E})=\mathcal K$
			\item[]$b.$ $\Phi_{\mathrm E}|_{\mathcal K}=\Phi_{\mathcal K}$
		\end{itemize}
	\end{thm}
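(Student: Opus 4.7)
The plan is to build $\Phi_{\mathrm E}$ as a block map on $\mathcal B(\mathcal H)$ whose output depends only on the $(0,0)$-corner $X_{00}=Q_0XQ_0$, in accordance with the structure~\eqref{eq:peripheral_projection} of any peripheral projection. A single application of $\Phi_{\mathrm E}$ will then already erase the off-diagonal and lower-right components of every $X$, so the asymptotic dynamics is entirely controlled by a UCP map on $\mathcal B(\mathcal H_0)$. The task thus splits into (i) producing a UCP map $\tilde\Phi_{00}:\mathcal B(\mathcal H_0)\to\mathcal B(\mathcal H_0)$ whose attractor subspace is exactly $\mathfrak A$ and whose restriction to $\mathfrak A$ coincides with $\Phi_{00}$ from~\eqref{eq:Phi00def}, and (ii) dilating $\mathcal P$ to a UCP map $\bar{\mathcal P}:\mathcal B(\mathcal H_0)\to\mathcal B(\mathcal H_1)$ extending $\mathcal P$ from $\mathfrak A$.

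For (i), I would take $\tilde\Phi_{00}:=\Phi_{00}\circ E_{\mathfrak A}$, where $E_{\mathfrak A}:\mathcal B(\mathcal H_0)\to\mathfrak A$ is the idempotent UCP conditional expectation of equation~\eqref{eq:asympttic_map_heisenberg_00}. Because $\Phi_{00}$ is a $\ast$-automorphism of $\mathfrak A$, the identity $E_{\mathfrak A}\circ\Phi_{00}=\Phi_{00}$ on $\mathfrak A$ gives by induction $\tilde\Phi_{00}^{\,n}=\Phi_{00}^{\,n}\circ E_{\mathfrak A}$: everything in $\ker E_{\mathfrak A}$ is annihilated already at $n=1$, while on $\mathfrak A$ the dynamics reduces to the $\ast$-automorphism $\Phi_{00}$, whose peripheral eigenvectors span $\mathfrak A$. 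Hence $\mathrm{Attr}(\tilde\Phi_{00})=\mathfrak A$ with $\tilde\Phi_{00}|_{\mathfrak A}=\Phi_{00}$. For (ii), since $\mathfrak A$ is a unital $C^\ast$-subalgebra of $\mathcal B(\mathcal H_0)$, Arveson's extension theorem provides the desired UCP extension $\bar{\mathcal P}$.

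I then define
\begin{equation}
\Phi_{\mathrm E}(X):=\tilde\Phi_{00}(X_{00})\oplus\bar{\mathcal P}\bigl(\tilde\Phi_{00}(X_{00})\bigr),\qquad X\in\mathcal B(\mathcal H).
\end{equation}
Complete positivity follows from the factorization $\Phi_{\mathrm E}=V\circ\tilde\Phi_{00}\circ C$, with CP compression $C(X)=Q_0XQ_0$ and UCP splitting $V(Y)=Y\oplus\bar{\mathcal P}(Y)$; unitality is immediate from $\tilde\Phi_{00}(\mathbb{I}_{\mathcal H_0})=\mathbb{I}_{\mathcal H_0}$ and $\bar{\mathcal P}(\mathbb{I}_{\mathcal H_0})=\mathbb{I}_{\mathcal H_1}$. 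Property~$(b)$ then follows directly: for $X\in\mathfrak A$, $\tilde\Phi_{00}(X)=\Phi_{00}(X)\in\mathfrak A$ and $\bar{\mathcal P}|_{\mathfrak A}=\mathcal P$, hence $\Phi_{\mathrm E}(X\oplus\mathcal P(X))=\Phi_{00}(X)\oplus\mathcal P(\Phi_{00}(X))=\Phi_{\mathcal K}(X\oplus\mathcal P(X))$. For~$(a)$, I would analyse $\Phi_{\mathrm E}(X)=\lambda X$ with $|\lambda|=1$: the vanishing off-diagonal blocks of $\Phi_{\mathrm E}(X)$ force $X_{01}=X_{10}=0$; the $(0,0)$ equation $\tilde\Phi_{00}(X_{00})=\lambda X_{00}$ forces $X_{00}\in\mathfrak A$; and the $(1,1)$ equation yields $X_{11}=\lambda^{-1}\bar{\mathcal P}(\tilde\Phi_{00}(X_{00}))=\mathcal P(X_{00})$, so $X\in\mathcal K$. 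Conversely, every element of $\mathcal K$ is a linear combination of such peripheral eigenvectors, since $\mathfrak A$ is spanned by peripheral eigenvectors of $\Phi_{00}$.

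The delicate point I anticipate is guaranteeing that $\tilde\Phi_{00}$ introduces no unintended peripheral eigenvalues outside $\mathfrak A$; this is precisely what the one-step collapse identity $\tilde\Phi_{00}=\Phi_{00}\circ E_{\mathfrak A}$ is engineered to ensure. The remaining verifications are essentially structural bookkeeping once this is in place.
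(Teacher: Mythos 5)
Your construction is essentially the paper's own proof: the paper also writes $\Phi_{\mathrm E}$ as the composition of a UCP pinching $\mathcal B(\mathcal H)\to\mathfrak A$ (your $E_{\mathfrak A}\circ C$, with $\rho_k=\mathbb I_{k,2}/m_k$), the $\ast$-automorphism $\Phi_{00}$, and the splitting $X\mapsto X\oplus\mathcal P(X)$, and your verification of properties $a.$ and $b.$ is sound. The only deviation is the appeal to Arveson's extension theorem for $\mathcal P$, which is superfluous since the preceding maps already land in $\mathfrak A$, where $\mathcal P$ is defined.
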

	
	\begin{proof}
		Similarly to the proof of the unfolding theorem of the Schr\"odinger dynamics~\cite{AFK_asympt_2}, we will define the sought extension as the composition of three UCP maps. First, we define
		\begin{equation}
			\Phi_{\mathrm{pinch}}: \mathcal B(\mathcal H) \rightarrow \mathfrak A,
		\end{equation}
		
		so that
		
		\begin{equation}
			Z \mapsto \Phi_{\mathrm{pinch}}(Z)= \bigoplus_{k=1}^M \mathrm{tr}_{k,2}(P_k Z P_k) \otimes \frac{\mathbb I_{k,2}}{m_k} ,
			\label{proj_map}
		\end{equation}
		with $P_k$ the projection onto $\mathcal H_{k,1}\otimes \mathcal H_{k,2}$ and $m_k := \dim(\mathcal{H}_{k,2})$ for any $k=1, \dots , M$. The map~\eqref{proj_map} is UCP,  since it is of the form~\eqref{eq:projection_faithful_case_heisenberg} with $\rho_k = \mathbb{I}_{k,2} / m_k$. We take the second map to be $\Phi_{00}$ in~\eqref{eq:Phi00def}, which is  a $*$-automorphism over the algebra $\mathfrak A$, thus it is also UCP. Finally, we post-compose this map with
		\begin{align}
			\Lambda : \mathfrak A &\rightarrow \mathcal K,\\ 
			X&\mapsto X\oplus \mathcal P(X),
		\end{align}
		which is evidently UCP since $\mathcal{P}$ is.
		The composition of the three maps,
		\begin{equation}
			\Phi_{\mathrm E} = \Lambda\circ\Phi_{00}\circ{\Phi}_{\mathrm{pinch}},
		\end{equation}
		is the desired UCP map, satisfying conditions $a.$ and $b.$	
	\end{proof}
	
	\begin{rem}
		\label{constraint_UCP_P_rmk}
		Note that the UCP map $\mathcal P$ in Theorem~\ref{th:unfolding_thm} is completely general and does not satisfy further constraints. Therefore, the properties of the map $\mathcal{P}_{11} \vert_{\mathfrak A}$ appearing in equation~\eqref{struc_attr_Heis} are also sufficient for the description of the structure of the attractor subspace of a UCP map. 
		%	The previous result is interesting for two reasons. First, it provides us with the ability to construct a specific channel that aligns with a given asymptotic structure. Secondly, the result highlights limitations in the decomposition of the attractor for the Heisenberg picture. It suggests that we cannot provide more extensive details about the attractor, mainly because it relies on the structure of a general UCP map $\mathcal P_{11}$.
	\end{rem}

	\section{Algebraic structure of the attractor subspace}
	\label{struc_DFA_sec} 

	In this section, we will explicitly show the relation between the structure of the attractor subspace of a UCP map described in the previous section and the algebraic structure of the reduced UCP map. In this way, we will justify the appearance of the Choi-Effros product characterizing the algebraic structure of the attractor subspace of a UCP map.

	To start with, consider the map
	\begin{align}
		\Lambda:\mathfrak A&\rightarrow \mathrm{Attr}(\Phi),\nonumber\\
		X_{00}&\mapsto \Lambda(X_{00})=
		\begin{pmatrix}
			X_{00} &  \\
			& \mathcal{P}_{11}(X_{00})
		\end{pmatrix}.
	\end{align}
	As observed in~\cite{gartner2012coherent}, $\Lambda$ defines a $*$-isomorphism between $\mathfrak A$ and $\mathrm{Attr}(\Phi)$, the latter equipped with the Choi-Effros product. Indeed, take $X_{00} , Y_{00} \in \mathfrak A$, and consider
	\begin{align}
		\Lambda(X_{00})\star\Lambda(Y_{00})
		&=\mathcal P_{\mathrm{P}}\left(
		\begin{pmatrix}
			X_{00} &  \\
			& \mathcal{P}_{11}(X_{00})
		\end{pmatrix}\begin{pmatrix}
			Y_{00} &  \\
			& \mathcal{P}_{11}(Y_{00})
		\end{pmatrix}\right)\nonumber\\
		& =  \mathcal P_{\mathrm{P}}
		\begin{pmatrix}
			X_{00}Y_{00} &  \\
			& \mathcal{P}_{11}(X_{00})\mathcal{P}_{11}(Y_{00})
		\end{pmatrix}\nonumber\\
		&=
		\begin{pmatrix}
			X_{00}Y_{00} &  \\
			& \mathcal{P}_{11}(X_{00}Y_{00})
		\end{pmatrix}
		=\Lambda(X_{00}Y_{00}).
	\end{align}
	
	Therefore $\Lambda^{-1}$ is a representation of the attractor subspace $\mathrm{Attr}(\Phi)$ of a UCP map $\Phi$ on  $\mathcal{B}(\mathcal{H}_0)$, the recurrence algebra of the evolution.
	Note that the UCP map $\Phi$ is peripherally automorphic if and only if the UCP map $\mathcal P_{11}$ is a $*$-morphism between $\mathfrak A$ and $\mathcal B(\mathcal H_1)$.

	The previous result shows the central role played by the map 
	\begin{equation}
		\Lambda: \mathfrak A \rightarrow \mathrm{Attr}(\Phi).
	\end{equation}
	Indeed, this map allows us to extend many known results in the faithful case to the non-faithful one. For instance, observe that, if $\Phi$ is faithful, then 
	\begin{equation}
    \label{corr_Heis_Schr}
		\mathrm{Attr} (\Phi) = \sigma^{-1/2}\mathrm{Attr}(\Phi^\dagger) \sigma^{-1/2},
	\end{equation}
	with $\sigma$ an invertible invariant state of the quantum channel $\Phi^\dagger$. The latter result relates the attractor subspaces in the Schr\"odinger and Heisenberg pictures of a faithful evolution~\cite[Subsection 4.3]{jex_st_2018}. In the general case, given a UCP map $\Phi$ with maximum-rank invariant state $\sigma$,~\eqref{corr_Heis_Schr} becomes    
	\begin{equation}
		\mathrm{Attr}(\Phi)=\Lambda(\sigma^{-1/2}\mathrm{Attr}(\Phi^\dagger) \sigma^{-1/2}),
	\end{equation}
	where we assumed $\sigma^{-1/2}$ to be $0$ on $\mathcal H_1$ and implicitly identified $\mathrm{Attr} (\phi_{00}) \oplus 0$ with $\mathrm{Attr} (\phi_{00})$.

	\section{Structure of the Choi-Effros decoherence-free algebra}
	\label{Choi-Effros_dec_sec}
	
	In the present section, we will exhibit an explicit structure for the Choi-Effros decoherence-free algebra $\mathcal{N}_{\star}$ of a UCP map, which is a useful tool we introduced in~\cite{AFK_asympt3} in order to understand the multiplicative properties of $\Phi$ in the asymptotic limit. The definition~\eqref{star_dec_def}, as well as some basic properties of this space, have been recalled in Subsection~\ref{intro_DFA_sec}. A main result about $\mathcal{N}_{\star}$ is the following decomposition theorem~\cite[Theorem 3.4]{AFK_asympt3}.

	\begin{thm}
		\label{sum_dec_N_th}
		Let $\Phi$ be a UCP map with peripheral projection $\mathcal{P}_{\mathrm{P}}$. Then its Choi-Effros decoherence-free algebra $\mathcal{N}_{\star}$ defined by equation~\eqref{star_dec_def} admits the following direct-sum decomposition
		\begin{equation}
			\label{sum_dec_N_star}
			\mathcal{N}_{\star}=\mathrm{Attr}(\Phi) \oplus \mathcal{K}_{\mathcal{P}_{\mathrm{P}}},
		\end{equation}
		where the $\ast$-ideal $\mathcal{K}_{\Psi}$ of any idempotent UCP map $\Psi$ is given by
		\begin{equation}
			\mathcal{K}_{\Psi}=\{ X \in \mathcal{B}(\mathcal{H}) \,\vert\, \Psi( X^\ast X)= \Psi( X X^\ast )=0  \}.
		\end{equation}
		Consequently $\mathcal{N}_{\star}$ is a ${C}^\ast$-algebra with respect to the composition product $\cdot$, the operator norm $\| \cdot \|$ and the adjoint $\ast$, or in short $(\mathcal{N}_{\star} , \cdot , \| \cdot \| , \ast)$ is a $C^\ast$-algebra. 
		%Also, $\mathcal{K}_{\mathcal{P}_{\mathrm{P}}}$ is a $\ast$-ideal of $\mathcal{N}_{\star}$ and, consequently, the quotient space
		%\begin{equation}
		%\label{Q_def}
		%\mathcal{Q} := \mathcal{N}_{\star} \backslash \mathcal{K}_ {\mathcal{P}_{\mathrm{P}}}
		%\end{equation}
		%is a ${C}^\ast$-algebra $\ast$-automorphic to $(\mathrm{Attr}(\Phi),\star,\norm{\ \cdot\ } , \ast)$. 
	\end{thm}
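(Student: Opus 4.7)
The plan is to establish the decomposition~\eqref{sum_dec_N_star} in three stages: verify both containments, show that every $X \in \mathcal{N}_\star$ decomposes with trivial intersection, and upgrade the vector-space splitting to the asserted $C^\ast$-algebra structure under composition. For $\mathrm{Attr}(\Phi) \subseteq \mathcal{N}_\star$, note that by Theorem~\ref{struc_Heis_th} the map $\Phi$ restricts to a $\ast$-automorphism of $(\mathrm{Attr}(\Phi), \star)$, so $\Phi^n(A \star B) = \Phi^n(A) \star \Phi^n(B)$ for $A, B \in \mathrm{Attr}(\Phi)$; extension to arbitrary $Y \in \mathcal{B}(\mathcal{H})$ uses the identity $A \star Y = A \star \mathcal{P}_{\mathrm{P}}(Y)$, a consequence of the idempotency of $\mathcal{P}_{\mathrm{P}}$. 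For $\mathcal{K}_{\mathcal{P}_{\mathrm{P}}} \subseteq \mathcal{N}_\star$, the operator Schwarz inequality~\eqref{eq:op_Schw_ineq} applied to $\mathcal{P}_{\mathrm{P}}$ forces $\mathcal{P}_{\mathrm{P}}(X) = 0$ whenever $\mathcal{P}_{\mathrm{P}}(X^\ast X) = 0$, so such $X$ lies in the multiplicative domain of $\mathcal{P}_{\mathrm{P}}$; the bimodule property then gives $\mathcal{P}_{\mathrm{P}}(XY) = \mathcal{P}_{\mathrm{P}}(YX) = 0$ for all $Y$, and the same vanishing is preserved under iterates since $\mathcal{P}_{\mathrm{P}} \Phi = \Phi \mathcal{P}_{\mathrm{P}}$, so both multiplicativity conditions in~\eqref{star_dec_def} are trivially satisfied.

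For the decomposition, given $X \in \mathcal{N}_\star$ set $Z := \mathcal{P}_{\mathrm{P}}(X) \in \mathrm{Attr}(\Phi)$ and $W := X - Z$. Applying the defining multiplicativity~\eqref{star_dec_def} with $Y$ equal successively to $X^\ast$, $Z$, $Z^\ast$, and passing to a subsequence $\Phi^{n_i} \to \mathcal{P}_{\mathrm{P}}$ as in~\eqref{P_p_for}, one obtains the chain of identities $\mathcal{P}_{\mathrm{P}}(X^\ast X) = \mathcal{P}_{\mathrm{P}}(X^\ast Z) = \mathcal{P}_{\mathrm{P}}(Z^\ast X) = \mathcal{P}_{\mathrm{P}}(Z^\ast Z)$; expanding $\mathcal{P}_{\mathrm{P}}(W^\ast W)$ linearly then cancels to zero, and an analogous computation yields $\mathcal{P}_{\mathrm{P}}(WW^\ast) = 0$, so $W \in \mathcal{K}_{\mathcal{P}_{\mathrm{P}}}$. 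The intersection is trivial by the Choi-Effros $C^\ast$-identity of Theorem~\ref{th:C_star_UCP}: any $X \in \mathrm{Attr}(\Phi) \cap \mathcal{K}_{\mathcal{P}_{\mathrm{P}}}$ satisfies $\|X\|^2 = \|X^\ast \star X\| = \|\mathcal{P}_{\mathrm{P}}(X^\ast X)\| = 0$.

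For the $C^\ast$-algebra structure, $\ast$-invariance and norm-closedness of $\mathcal{N}_\star$ are immediate from~\eqref{star_dec_def}, and the $C^\ast$-identity is inherited from $\mathcal{B}(\mathcal{H})$; the non-trivial point is closure under the composition product. I would check that $\mathcal{K}_{\mathcal{P}_{\mathrm{P}}}$ behaves as a two-sided $\ast$-ideal in products involving $\mathrm{Attr}(\Phi)$ via the Schwarz bound $\mathcal{P}_{\mathrm{P}}((AW)^\ast(AW)) \leq \|A\|^2 \mathcal{P}_{\mathrm{P}}(W^\ast W) = 0$ together with the bimodule property of the multiplicative domain of $\mathcal{P}_{\mathrm{P}}$, and then verify that for $A, B \in \mathrm{Attr}(\Phi)$ the defect $AB - A \star B = AB - \mathcal{P}_{\mathrm{P}}(AB)$ lies in $\mathcal{K}_{\mathcal{P}_{\mathrm{P}}}$. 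This last claim is the main obstacle: it is exactly what forces the attractor subspace to carry the Choi-Effros rather than the composition product, so the ideal $\mathcal{K}_{\mathcal{P}_{\mathrm{P}}}$ must be added to reconcile $\star$ with $\cdot$. The cleanest route should be through the block form~\eqref{eq:peripheral_projection} of $\mathcal{P}_{\mathrm{P}}$, using that $\mathcal{P}_{00}$ acts as a $\ast$-homomorphism on the attractor $\mathfrak{A}$ of the faithful reduced map $\phi_{00}$ (so the $(0,0)$-block of the defect vanishes), while the $(1,1)$-block is absorbed into $\mathcal{K}_{\mathcal{P}_{\mathrm{P}}}$.
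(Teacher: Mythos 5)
The paper does not actually prove Theorem~\ref{sum_dec_N_th}: it is recalled verbatim from~\cite[Theorem 3.4]{AFK_asympt3}, so there is no in-text proof to compare against. Judged on its own terms, your argument for the direct-sum decomposition is essentially sound: both inclusions, the splitting $X=\mathcal{P}_{\mathrm{P}}(X)+(X-\mathcal{P}_{\mathrm{P}}(X))$ with the difference landing in $\mathcal{K}_{\mathcal{P}_{\mathrm{P}}}$, and the triviality of the intersection are all handled correctly, and the tools you invoke (the Choi--Effros module relation $\mathcal{P}_{\mathrm{P}}(\mathcal{P}_{\mathrm{P}}(A)B)=\mathcal{P}_{\mathrm{P}}(\mathcal{P}_{\mathrm{P}}(A)\mathcal{P}_{\mathrm{P}}(B))=\mathcal{P}_{\mathrm{P}}(A\mathcal{P}_{\mathrm{P}}(B))$, the multiplicative-domain bimodule property, and the $\Phi$-invariance of $\mathcal{K}_{\mathcal{P}_{\mathrm{P}}}$ via the Schwarz inequality) are exactly the right ones for the idempotent UCP map $\mathcal{P}_{\mathrm{P}}$.

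The one genuine soft spot is the closure of $\mathcal{N}_\star$ under the composition product. Your Schwarz bound gives $\mathcal{P}_{\mathrm{P}}\bigl((AW)^\ast(AW)\bigr)\leqslant\|A\|^2\,\mathcal{P}_{\mathrm{P}}(W^\ast W)=0$, but the companion condition $\mathcal{P}_{\mathrm{P}}\bigl((AW)(AW)^\ast\bigr)=\mathcal{P}_{\mathrm{P}}(AWW^\ast A^\ast)=0$ does not follow the same way, since $AWW^\ast A^\ast\not\leqslant\|A\|^2\,WW^\ast$ in general; nor can you rescue it with the bimodule property applied to $A$, because an element of $\mathrm{Attr}(\Phi)$ lies in the multiplicative domain of $\mathcal{P}_{\mathrm{P}}$ precisely when $\Phi$ is peripherally automorphic, which is not assumed. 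The repair is the one you already name in your last sentence: use the block form~\eqref{eq:peripheral_projection}. Since $\mathcal{P}_{00}$ is faithful, $\mathcal{P}_{\mathrm{P}}(T)=0$ for $T\geqslant 0$ forces $Q_0TQ_0=0$ and hence $TQ_0=Q_0T=0$, so $\mathcal{K}_{\mathcal{P}_{\mathrm{P}}}=0\oplus\mathcal{B}(\mathcal{H}_1)$ --- exactly the identification the paper makes en route to Theorem~\ref{dec_N_H01_th}. Because $\mathrm{Attr}(\Phi)$ is block diagonal by Theorem~\ref{struc_Heis_th}, all products $AW$, $WA$, $WW'$ then land in $0\oplus\mathcal{B}(\mathcal{H}_1)$ and the defect $AB-A\star B$ is supported on the $(1,1)$ block, so closure follows in two lines. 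I would promote that block computation from an afterthought to the actual argument and drop the abstract Schwarz/bimodule route for this step.
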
 
	Our goal is to rewrite the decomposition~\eqref{sum_dec_N_star} in terms of the decomposition~\eqref{H_dec_supp} of the Hilbert space $\mathcal{H}$. By decomposing $X=(X_{ij})_{i,j=0} \in \mathcal{B}(\mathcal{H})$, it turns out that
	\begin{equation}
		\label{con_KPp}
		X \in \mathcal{K}_{\mathcal{P}_{\mathrm{P}}} \Leftrightarrow 
		\begin{cases}
			&\!\!\!\!\!\!\mathcal{P}_{00}(X_{00}^\ast X_{00} + X_{10}^\ast X_{10})=\mathcal{P}_{11}(X_{00}^\ast X_{00} + X_{10}^\ast X_{10})=0 \\
			&\!\!\!\!\!\!\mathcal{P}_{00}(X_{00} X_{00}^\ast + X_{01} X_{01}^\ast )=\mathcal{P}_{11}(X_{00} X_{00}^\ast  + X_{01} X_{01}^\ast )=0 
		\end{cases}.
	\end{equation}
	By the positivity of $\mathcal{P}_{ii}$, $i=0,1$, this is equivalent to
	\begin{align}
		& \mathrm{tr}( \mathcal{P}_{00}(X_{00}^\ast X_{00} + X_{10}^\ast X_{10}))=\mathrm{tr} (\mathcal{P}_{11}( X_{00}^\ast X_{00} + X_{10}^\ast X_{10}))=0 \label{con_KPp_tr} \\
		& \mathrm{tr} ( \mathcal{P}_{00}(X_{00} X_{00}^\ast + X_{01} X_{01}^\ast ) )=\mathrm{tr} (\mathcal{P}_{11} (X_{00} X_{00}^\ast + X_{01} X_{01}^\ast ) )=0.
		\label{con_KPp_tr_2}
	\end{align}
	Since $\mathcal{P}_{00}$ is faithful, $\mathcal{P}_{00}^{\dagger}(\mathbb{I}_{\mathcal{H}_0})>0$ (see equation~\eqref{supp_fix}), and equation~\eqref{con_KPp_tr} yields
	\begin{equation}
		\mathrm{tr}(\mathcal{P}_{00}(X_{00}^\ast X_{00} + X_{10}^\ast X_{10}))= \mathrm{tr}(\mathcal{P}_{00}^\dagger(\mathbb{I}_{\mathcal{H}_{0}})(X_{00}^\ast X_{00} + X_{10}^\ast X_{10}))=0,
	\end{equation}
	which implies $X_{00}=X_{10}=0$. Similarly, one can show from~\eqref{con_KPp_tr_2} that $X_{01}=0$. Therefore,
	\begin{equation}
		\mathcal{K}_{\mathcal{P}_{\mathrm{P}}}= 0 \oplus \mathcal{B}(\mathcal{H}_1),
	\end{equation}
	i.e. $\mathcal{K}_{\mathcal{P}_{\mathrm{P}}}$ is isomorphic to the transient algebra $\mathcal{B}(\mathcal{H}_1)$.
	This implies, together with the decomposition~\eqref{struc_attr_Heis}, that
	\begin{equation}
		\mathcal{N}_{\star}= \mathfrak A \oplus \mathcal{B}(\mathcal{H}_1),
	\end{equation}
	or, in a block matrix form,
	\begin{equation}
		\mathcal{N}_{\star}=
		\begin{pmatrix}
			\mathfrak A& \\
			& \mathcal B(\mathcal H_1)
		\end{pmatrix}.
	\end{equation}
	Therefore, we have the following novel decomposition for the algebra $\mathcal{N}_{\star}$.
	\begin{thm}
		\label{dec_N_H01_th}
		Let $\Phi$ be a UCP map with peripheral projection $\mathcal{P}_{\mathrm{P}}$ and Choi-Effros decoherence-free algebra $\mathcal{N}_{\star}$. Let $\mathcal{H}_{0}=\mathrm{supp}(\mathcal{P}_{\mathrm{P}}^\dagger(\mathbb{I}))$. Then the algebra $\mathcal{N}_{\star}$ can be decomposed as
		\begin{equation}
			\label{dec_N_H01}
			\mathcal{N}_{\star} = \mathfrak A \oplus \mathcal{B}(\mathcal{H}_{1}),
		\end{equation}
		where $\mathcal{H}_1 = \mathcal{H}_0^\perp$ and $\mathfrak A$ is the attractor subspace of the reduced UCP map $\phi_{00}$ of $\Phi$.
	\end{thm}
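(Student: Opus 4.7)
The plan is to assemble the decomposition from two ingredients already available in the excerpt: the abstract direct-sum decomposition $\mathcal{N}_{\star}=\mathrm{Attr}(\Phi)\oplus \mathcal{K}_{\mathcal{P}_{\mathrm{P}}}$ furnished by Theorem~\ref{sum_dec_N_th}, and the block form of the peripheral projection $\mathcal{P}_{\mathrm{P}}$ given by equation~\eqref{eq:peripheral_projection}. It then suffices to identify each of the two summands with the corresponding block on the right-hand side of~\eqref{dec_N_H01}.

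For the $\mathrm{Attr}(\Phi)$ summand I would invoke Theorem~\ref{struc_Heis_th}: every element of $\mathrm{Attr}(\Phi)$ has the form $X\oplus \mathcal{P}_{11}(X)$ with $X\in\mathfrak A$, and the map $\Lambda : X\mapsto X\oplus \mathcal{P}_{11}(X)$ introduced at the start of Section~\ref{struc_DFA_sec} is a $*$-isomorphism from $\mathfrak A$ (with the composition product) to $\mathrm{Attr}(\Phi)$ (with the Choi--Effros product). Hence this summand may be identified with $\mathfrak A$ at the level of $C^\ast$-algebras.

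The substantive step concerns $\mathcal{K}_{\mathcal{P}_{\mathrm{P}}}$. I would write a general $X\in\mathcal{B}(\mathcal{H})$ in block form with respect to $\mathcal{H}=\mathcal{H}_0\oplus\mathcal{H}_1$ and, using~\eqref{eq:peripheral_projection}, rewrite the defining conditions $\mathcal{P}_{\mathrm{P}}(X^\ast X)=\mathcal{P}_{\mathrm{P}}(XX^\ast)=0$ as the vanishing of both $\mathcal{P}_{00}$ and $\mathcal{P}_{11}$ on the positive operators $X_{00}^\ast X_{00}+X_{10}^\ast X_{10}$ and $X_{00}X_{00}^\ast+X_{01}X_{01}^\ast$, as in equations~\eqref{con_KPp}--\eqref{con_KPp_tr_2}. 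The main obstacle, and the only genuinely non-routine step, is deducing that $\mathcal{P}_{00}(A)=0$ forces $A=0$ whenever $A\geqslant 0$. Here faithfulness of the reduced UCP map $\phi_{00}$ is essential: since $\mathcal{P}_{00}$ is the peripheral projection of the faithful map $\phi_{00}$, one has $\mathcal{P}_{00}^\dagger(\mathbb I_{\mathcal{H}_0})>0$, and then the trace-duality identity $\mathrm{tr}(\mathcal{P}_{00}(A))=\mathrm{tr}\bigl(\mathcal{P}_{00}^\dagger(\mathbb I_{\mathcal{H}_0})\,A\bigr)$ together with positivity of $A$ gives $A=0$.

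Applying this to the two positive combinations above forces $X_{00}=X_{01}=X_{10}=0$, whereas the $(1,1)$-block $X_{11}$ is unconstrained because $\mathcal{P}_{\mathrm{P}}$ depends only on the $(0,0)$-block by~\eqref{eq:peripheral_projection}. Therefore $\mathcal{K}_{\mathcal{P}_{\mathrm{P}}}=0\oplus\mathcal{B}(\mathcal{H}_1)$. Substituting this and the identification $\mathrm{Attr}(\Phi)\cong\mathfrak A$ into Theorem~\ref{sum_dec_N_th} yields the block-diagonal decomposition~\eqref{dec_N_H01}.
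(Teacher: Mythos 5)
Your proposal is correct and follows essentially the same route as the paper: start from the direct-sum decomposition of Theorem~\ref{sum_dec_N_th}, use the block form~\eqref{eq:peripheral_projection} of $\mathcal{P}_{\mathrm{P}}$ together with the faithfulness of $\mathcal{P}_{00}$ (via the trace-duality identity with $\mathcal{P}_{00}^\dagger(\mathbb I_{\mathcal H_0})>0$) to show $\mathcal{K}_{\mathcal{P}_{\mathrm{P}}}=0\oplus\mathcal{B}(\mathcal{H}_1)$, and combine with Theorem~\ref{struc_Heis_th} to identify the attractor summand with $\mathfrak A$. This matches the paper's derivation in Section~\ref{Choi-Effros_dec_sec} step for step.
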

	If $\Phi$ is faithful, then $\mathcal{H}_0^\perp = 0$, and consequently $\mathrm{Attr}(\Phi)=\mathcal{N}_{\star}$, as already seen in~\cite[Theorem 3.11]{AFK_asympt3}. In fact, according to the latter result, this condition characterizes faithful evolutions $\Phi$.
	
	%In fact, if, under the assumption $\mathrm{Attr}(\Phi)=\mathcal{N}_{\star}$, we assume for the sake of contradiction that $\mathcal{H}_{1} \neq 0$, then we would have that $X = \mathbb{I}_{\mathcal{H}_0} \oplus 0_{\mathcal{H}_1} \in \mathrm{Attr}(\Phi)$, in contrast with the unitality of $\mathcal{P}_{11}$. Therefore $\mathcal{H}_{1}=0$, i.e. $\Phi$ is faithful.

	\section{Some considerations on the Schwarz inequality}
	\label{Schw_CP} 
	In the previous sections, we consistently discussed the Heisenberg open-system dynamics characterized by a UCP map $\Phi$. In the current section, we will extend the previous findings to the broader class of Schwarz maps. A positive unital map is said to be a \emph{Schwarz map}~\cite{wolf2010inverse} if it satisfies 
	\begin{equation}
		\label{op_Schw_ineq}
		\Phi(X^\ast X) \geqslant \Phi(X)^\ast \Phi(X),  \quad \mbox{ for all } X \in \mathcal{B}(\mathcal{H}), 
	\end{equation}
	called \emph{operator Schwarz inequality}. All UCP maps satisfy equation~\eqref{op_Schw_ineq}, while there are Schwarz maps that are not UCP~\cite{choi1980some,chruscinski2020kadison}.

	Let $\Phi$ be a Schwarz map, and note that we can define the attractor subspace $\mathrm{Attr}(\Phi)$ as well as the peripheral projection $\mathcal{P}_{\mathrm{P}}$ of $\Phi$ as in equations~\eqref{attr_def} and~\eqref{P_p_def}. In particular, as a consequence of~\eqref{P_p_for}, the peripheral projection $\mathcal{P}_{\mathrm{P}}$ of $\Phi$ is a Schwarz map since $\Phi$ is.
	
	Let us generalize Theorem~\ref{struc_Heis_th}  to Schwarz maps. First, we can define once more the reduced map of $\Phi$, 
	\begin{equation}
		\phi_{00}:\mathcal B(\mathcal H_0)\rightarrow \mathcal B(\mathcal H_0),
	\end{equation}
	which turns out to be a faithful Schwarz map on $\mathcal{B}(\mathcal{H}_{0})$. Therefore, its peripheral projection $\mathcal{P}_{00}$ is a Schwarz map too, and it is also faithful. Let us now prove that a faithful idempotent Schwarz map is automatically completely positive. 
	\begin{lem}
		\label{lemma_Schw_id_faith}
		Let $\Phi$ be a faithful idempotent Schwarz map, i.e. $\Phi^2 = \Phi$. Then it is completely positive.
	\end{lem}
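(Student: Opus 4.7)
The plan is to reduce the statement to Tomiyama's theorem: once $\mathrm{Fix}(\Phi)$ is known to be a $C^{\ast}$-subalgebra of $\mathcal{B}(\mathcal{H})$, the idempotent unital positive map $\Phi$ becomes a norm-one projection onto it, and such projections are automatically completely positive.

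First, I would use idempotency to identify $\mathrm{Fix}(\Phi)=\mathrm{Ran}(\Phi)$, and observe that this subspace is closed under the involution because $\Phi$, being positive, preserves adjoints. The nontrivial step is multiplicative closure. For $A\in\mathrm{Fix}(\Phi)$ the Schwarz inequality~\eqref{op_Schw_ineq} together with $\Phi(A)=A$ gives $\Phi(A^{\ast}A)-A^{\ast}A\geqslant 0$. Pairing this positive operator with a faithful invariant state $\rho>0$ for $\Phi^{\dagger}$ (granted by Definition~\ref{faith_def}) and exploiting $\Phi^{\dagger}(\rho)=\rho$ yields
\begin{equation}
\mathrm{tr}\bigl(\rho(\Phi(A^{\ast}A)-A^{\ast}A)\bigr)=\mathrm{tr}\bigl((\Phi^{\dagger}(\rho)-\rho)A^{\ast}A\bigr)=0,
\end{equation}
which, by strict positivity of $\rho$, forces $\Phi(A^{\ast}A)=A^{\ast}A$. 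A standard polarization applied to $(A+\lambda B)^{\ast}(A+\lambda B)$ for $A,B\in\mathrm{Fix}(\Phi)$ and $\lambda\in\{1,i\}$ then lifts this equality to $\Phi(A^{\ast}B)=A^{\ast}B$, so that $\mathrm{Fix}(\Phi)$ is closed under composition and hence is a unital $C^{\ast}$-subalgebra of $\mathcal{B}(\mathcal{H})$.

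The concluding step is to invoke Tomiyama's classical theorem: every norm-one linear projection of a $C^{\ast}$-algebra onto a $C^{\ast}$-subalgebra is automatically a completely positive conditional expectation. Since $\Phi$ is unital and positive, $\|\Phi\|=1$, and since $\Phi$ is idempotent with range $\mathrm{Fix}(\Phi)$, it qualifies as such a projection. Complete positivity follows at once.

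The only delicate point in the argument is the passage from the Schwarz inequality to equality on fixed points, which is precisely where the faithfulness hypothesis enters in an essential way: without it, the ``Schwarz defect'' $\Phi(A^{\ast}A)-A^{\ast}A$ need not vanish on $\mathrm{Fix}(\Phi)$, and the fixed-point set may fail to be multiplicatively closed. An alternative but essentially equivalent route would be to show that $\mathrm{Fix}(\Phi)$ lies in the multiplicative domain of $\Phi$ and to use the bimodule identity $\Phi(AXB)=A\,\Phi(X)\,B$ for $A,B\in\mathrm{Fix}(\Phi)$; however, appealing to Tomiyama keeps the exposition shortest.
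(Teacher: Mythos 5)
Your proof is correct, and it reaches the conclusion by a more self-contained route than the paper does. The paper's own proof is a two-line reduction: it checks that faithfulness (read there as $\Phi^\dagger(\mathbb{I})>0$) implies the operational condition that $A\geqslant 0$ and $\Phi(A)=0$ force $A=0$, and then invokes St{\o}rmer's Corollary 2.2.7 as a black box. You instead unpack what that corollary does: the Schwarz inequality gives $\Phi(A^\ast A)-A^\ast A\geqslant 0$ on fixed points, pairing with an invertible invariant state annihilates this positive defect, polarization upgrades the conclusion to multiplicative closure of $\mathrm{Fix}(\Phi)=\mathrm{Ran}(\Phi)$, and Tomiyama's theorem (essentially the same tool the paper uses, via Brown--Ozawa Theorem 1.5.10, to prove Proposition~\ref{prop_idem_pos_per_aut}) then yields complete positivity, the norm-one hypothesis being automatic since $\|\Phi\|=\|\Phi(\mathbb{I})\|=1$ for a unital positive map. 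The two arguments are mathematically equivalent --- yours is in effect the proof of the cited corollary --- but yours has the merit of making explicit exactly where faithfulness and the Schwarz inequality enter. The only point worth flagging is that you use faithfulness in the sense of Definition~\ref{faith_def} (existence of an invertible invariant state), whereas the paper's proof works from $\Phi^\dagger(\mathbb{I})>0$; for an idempotent unital positive map these coincide, since $\Phi^\dagger(\mathbb{I})/d$ is itself an invariant state, invertible precisely when $\Phi^\dagger(\mathbb{I})>0$, as the paper notes around Remark~\ref{converse-faith-math-rem}, so no gap arises.
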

	\begin{proof}
		Given a positive unital map $\Phi$, faithfulness is defined as $\Phi^\dagger(\mathbb{I})>0$. Therefore, given $A \geqslant 0$, we have
		\begin{equation}
			\label{faith_math}
			\Phi(A)=0 \Leftrightarrow \mathrm{tr}(\Phi^\dagger(\mathbb{I})A)=\mathrm{tr}(\Phi(A))=0 \Leftrightarrow A=0.
		\end{equation}
		The assertion now follows from~\cite[Corollary 2.2.7]{stormer2013positive} thanks to the idempotence of $\Phi$.
	\end{proof}
	
	\begin{rem}
		\label{converse-faith-math-rem}
		Conversely, equation~\eqref{faith_math} immediately implies the faithfulness of $\Phi$. 
		%	since, given $A \geqslant 0$, then
		%	\begin{equation}
			%	\Phi(A)=0 \Leftrightarrow \tr(\Phi(A)) = \tr(\Phi^\dagger(\mathbb{I})A) = 0,
			%	\end{equation}
		%	which yields $\Phi^\dagger(\mathbb{I}) > 0$.
	\end{rem}
	
	As a consequence, $\mathcal{P}_{00}$ is a UCP map even when ${\Phi}_{00}$ is only a Schwarz map. It is possible to repeat the proof of Theorem~\ref{struc_Heis_th} for Schwarz maps $\Phi$. The only difference concerns $\mathcal{P}_{11}$ involved in the action~\eqref{eq:unital_map_attractor} of the projection $\mathcal{P}_{\mathrm{P}}$. 
	Specifically, the following result holds.
	
	\begin{thm}
		\label{struc_Heis_th_Schw}
		Let $\Phi$ be a Schwarz map. Its peripheral projection $\mathcal{P}_{\mathrm{P}}$ takes the form
		\begin{equation}
			\label{struc_P_p_Heis_Schw}
			\mathcal{P}_{\mathrm{P}}(X)=\mathcal{P}_{00}(X_{00}) \oplus \mathcal{P}_{11}(X_{00}),
		\end{equation}
		where the UCP map $\mathcal{P}_{00} $ is the peripheral projection of the reduced map $\phi_{00}$ of $\Phi$ defined by~\eqref{eq:unital_map_attractor} and $\mathcal{P}_{11} : \mathrm{Attr}(\phi_{00}) \rightarrow \mathcal{B}(\mathcal{H}_{1})$ is a Schwarz map. An element $X \in \mathrm{Attr}(\Phi)$, the attractor subspace of $\Phi$, admits the following decomposition
		\begin{equation}
			X = X_{00} \oplus \mathcal{P}_{11}(X_{00}), \quad X_{00}\in \mathrm{Attr}(\phi_{00}).
		\end{equation}
	\end{thm}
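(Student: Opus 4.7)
The plan is to mirror the proof of Theorem~\ref{struc_Heis_th} and verify that each step only ever invokes the operator Schwarz inequality~\eqref{op_Schw_ineq}, never full complete positivity. This is possible because both $\Phi$ and its peripheral projection $\mathcal{P}_{\mathrm{P}}$ (which arises as a Cesàro-type limit of powers of $\Phi$) are Schwarz maps, so the same block-matrix manipulations apply.

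First, I would set up the reduction to $\mathcal{H}_0 = \mathrm{supp}(\mathcal{P}_{\mathrm{P}}^\dagger(\mathbb{I}))$ by defining the reduced map $\phi_{00}$ on $\mathcal{B}(\mathcal{H}_0)$. As noted in the excerpt, $\phi_{00}$ is a faithful Schwarz map, and therefore its peripheral projection $\mathcal{P}_{00}$ is a faithful, idempotent Schwarz map. Here Lemma~\ref{lemma_Schw_id_faith} is crucial: it upgrades $\mathcal{P}_{00}$ to a genuine UCP map, so the range $\mathfrak{A} = \mathrm{Attr}(\phi_{00})$ inherits the standard C*-algebra structure from the UCP theory of Section~\ref{struc_attr_Heis_sec} (in particular it is closed under the ordinary composition product, so $X_{00} \in \mathfrak{A}$ implies $X_{00}^\ast X_{00} \in \mathfrak{A}$).

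Next, with respect to $\mathcal{H} = \mathcal{H}_0 \oplus \mathcal{H}_1$, write $\mathcal{P}_{\mathrm{P}}$ in block form as in~\eqref{eq:asympttic_map_heisenberg_block}. For $X_{00} \in \mathfrak{A}$ embedded as $X_{00} \oplus 0$, apply the operator Schwarz inequality to $\mathcal{P}_{\mathrm{P}}$ exactly as in~\eqref{eq:schrarz_periph}. The upper-left block, together with $\mathcal{P}_{00}(X_{00}^\ast X_{00}) = X_{00}^\ast X_{00}$ and $\mathcal{P}_{00}(X_{00}) = X_{00}$ (because $X_{00} \in \mathfrak{A}$), yields $\mathcal{P}_{10}(X_{00}) = 0$ for all $X_{00} \in \mathfrak{A}$. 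Invoking the idempotence identity $\mathcal{P}_{ij}\mathcal{P}_{00} = \mathcal{P}_{ij}$ in~\eqref{idem_cond} then propagates this to $\mathcal{P}_{10} \equiv 0$ on all of $\mathcal{B}(\mathcal{H})$; the symmetric argument with $X_{00}^\ast \oplus 0$ gives $\mathcal{P}_{01} \equiv 0$. This establishes the block-diagonal form~\eqref{struc_P_p_Heis_Schw}, and the characterization of $\mathrm{Attr}(\Phi) = \mathrm{ran}(\mathcal{P}_{\mathrm{P}})$ follows immediately.

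Finally, to verify that $\mathcal{P}_{11}$ is Schwarz on $\mathfrak{A}$, read off the lower-right block of the same inequality $\mathcal{P}_{\mathrm{P}}((X_{00}\oplus 0)^\ast(X_{00}\oplus 0)) \geqslant \mathcal{P}_{\mathrm{P}}(X_{00}\oplus 0)^\ast \mathcal{P}_{\mathrm{P}}(X_{00}\oplus 0)$, which is precisely $\mathcal{P}_{11}(X_{00}^\ast X_{00}) \geqslant \mathcal{P}_{11}(X_{00})^\ast \mathcal{P}_{11}(X_{00})$ once we know the off-diagonal blocks vanish. The main subtle point, and the only real obstacle, is to confirm that nothing in the original UCP proof secretly used complete positivity: the delicate dependency is on $\mathfrak{A}$ being a *-algebra with ordinary product, which is supplied by Lemma~\ref{lemma_Schw_id_faith} and the faithfulness of $\mathcal{P}_{00}$. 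Everything else is the same block-wise Schwarz manipulation.
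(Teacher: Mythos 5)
Your proposal is correct and follows essentially the same route as the paper: reduce to the faithful block, use Lemma~\ref{lemma_Schw_id_faith} to upgrade the idempotent $\mathcal{P}_{00}$ to a UCP map so that $\mathfrak{A}$ is a genuine $\ast$-algebra, and then rerun the block-matrix Schwarz-inequality argument of Theorem~\ref{struc_Heis_th}, with the lower-right block showing that $\mathcal{P}_{11}$ is only Schwarz. (Minor quibble: $\mathcal{P}_{\mathrm{P}}$ is a subsequential limit of the powers $\Phi^{n_i}$ rather than a Ces\`aro limit, but this does not affect the argument.)
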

	
	The only difference from Theorem~\ref{struc_Heis_th} is that $\mathcal{P}_{11}$ is \emph{not} generally completely positive. Indeed, if $\mathcal{P}_{11}$ were a UCP map, then it is clear that $\mathcal{P}_{\mathrm{P}}$ would be UCP too. But we know that it is possible to construct Schwarz idempotent maps that are not  completely positive when $d \geqslant 4$, see~\cite[Example pg. 77, Theorem 3.1]{osaka1991positive}. 
	
	We can prove that $\mathcal{P}_{11}$ and $\mathcal{P}_{\mathrm{P}}$ can fail to be completely positive only if the Schwarz map $\Phi$ is not peripherally automorphic. In fact, Lemma~\ref{lemma_Schw_id_faith} can be generalized to peripherally automorphic maps.

	\begin{prop}
		\label{prop_idem_pos_per_aut}
		Let $\Phi$ be a peripherally automorphic unital positive idempotent map. Then it is completely positive. 
	\end{prop}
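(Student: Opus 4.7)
The plan is to show that the hypotheses force $\mathrm{ran}(\Phi)$ to be a unital $C^{\ast}$-subalgebra of $\mathcal{B}(\mathcal{H})$ and then invoke a Tomiyama-type theorem on norm-one projections to upgrade positivity to complete positivity.

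First, since $\Phi^{2} = \Phi$ we have $\mathrm{Attr}(\Phi) = \mathrm{Fix}(\Phi) = \mathrm{ran}(\Phi)$ and $\mathcal{P}_{\mathrm{P}} = \Phi$, so the peripheral-automorphism condition $X \star Y = XY$ reduces to $\Phi(XY) = XY$ for all $X, Y \in \mathrm{ran}(\Phi)$. In particular, $\mathrm{ran}(\Phi)$ is closed under the composition product. Positivity forces $\Phi$ to be Hermiticity preserving, so the range is also closed under the involution, and unitality gives $\mathbb{I} \in \mathrm{ran}(\Phi)$. Hence $\mathcal{A} := \mathrm{ran}(\Phi)$ is a unital $\ast$-subalgebra of $\mathcal{B}(\mathcal{H})$, which in finite dimensions is automatically norm closed and therefore a $C^{\ast}$-subalgebra.

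Second, a positive unital linear map between unital $C^{\ast}$-algebras is contractive, so $\Phi$ is a norm-one linear projection of $\mathcal{B}(\mathcal{H})$ onto the $C^{\ast}$-subalgebra $\mathcal{A}$. By Tomiyama's theorem (in the spirit of the Størmer reference already invoked in the proof of Lemma~\ref{lemma_Schw_id_faith}), every such projection is a conditional expectation, satisfying the bimodule identity $\Phi(AXB) = A\,\Phi(X)\,B$ for all $A, B \in \mathcal{A}$ and $X \in \mathcal{B}(\mathcal{H})$. A standard consequence of this identity is that $\Phi \otimes \mathrm{id}_{n}$ is again a positive unital idempotent onto $M_{n}(\mathcal{A})$ for every $n \in \mathbb{N}$, so $\Phi$ is completely positive.

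The main subtlety is that positivity alone does not entail the Schwarz inequality, hence the faithful-plus-Schwarz argument of Lemma~\ref{lemma_Schw_id_faith} is not available here. Peripheral automorphism is precisely the ingredient that compensates: rather than controlling $\Phi(X^{\ast}X)$ via $\Phi(X)^{\ast}\Phi(X)$, it provides directly the multiplicative closure of the range, which is exactly what the Tomiyama step requires in order to promote mere positivity to complete positivity.
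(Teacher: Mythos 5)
Your proof is correct and follows essentially the same route as the paper: the peripheral-automorphism hypothesis makes $\mathrm{ran}(\Phi)=\mathrm{Attr}(\Phi)$ a unital $C^{\ast}$-subalgebra, and complete positivity then follows from Tomiyama's theorem on norm-one projections onto $C^{\ast}$-subalgebras, which is precisely the result the paper invokes via~\cite[Theorem 1.5.10]{brown2008}. You merely spell out the details (multiplicative and $\ast$-closure of the range, contractivity via Russo--Dye) that the paper leaves implicit.
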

	\begin{proof}
		Since $\Phi$ is peripherally automorphic, the range $\mathrm{Ran}(\Phi)=\mathrm{Attr}(\Phi)$ of $\Phi$ is a $C^\ast$-subalgebra of $\mathcal{B}(\mathcal{H})$, and the statement follows from~\cite[Theorem 1.5.10]{brown2008}.
	\end{proof}
	\begin{rem}
	The proof strategy of Lemma~\ref{lemma_Schw_id_faith} cannot be applied in order to prove Proposition~\ref{prop_idem_pos_per_aut}, even if we assume $\Phi$ to be a Schwarz map. In fact~\eqref{faith_math} is not generally true for peripherally automorphic idempotent Schwarz maps, see~\cite[Example 2.11 (1)]{rajaramaperipheral_2}.	 
	\end{rem}
	\begin{cor}
		\label{prop_Schw_id_per_aut}
		Let $\Phi$ be a peripherally automorphic Schwarz map. Then, following the notation of Theorem~\ref{struc_Heis_th_Schw}, $\mathcal{P}_{\mathrm{P}}$, and consequently $\mathcal{P}_{11}$, are completely positive.
	\end{cor}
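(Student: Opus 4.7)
The plan is to reduce to Proposition~\ref{prop_idem_pos_per_aut} applied to the map $\mathcal{P}_{\mathrm{P}}$ itself. First I would verify that $\mathcal{P}_{\mathrm{P}}$ is a unital, positive, idempotent map: unitality and positivity come from the fact that the limit~\eqref{P_p_for} defining $\mathcal{P}_{\mathrm{P}}$ is a limit of positive unital maps (iterates of $\Phi$, which is a positive unital Schwarz map), and the class of such maps is closed under composition and pointwise limits; idempotence is immediate since $\mathcal{P}_{\mathrm{P}}$ is a spectral projection, so $\mathcal{P}_{\mathrm{P}}^{2} = \mathcal{P}_{\mathrm{P}}$.

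Next I would argue that $\mathcal{P}_{\mathrm{P}}$ is peripherally automorphic in the sense of the definition in Subsection~\ref{intro_DFA_sec}. The peripheral spectrum of the idempotent $\mathcal{P}_{\mathrm{P}}$ is $\{1\}$, hence $\mathrm{Attr}(\mathcal{P}_{\mathrm{P}}) = \mathrm{Fix}(\mathcal{P}_{\mathrm{P}}) = \mathrm{Ran}(\mathcal{P}_{\mathrm{P}}) = \mathrm{Attr}(\Phi)$, and $\mathcal{P}_{\mathrm{P}}$ is its own peripheral projection. Consequently its Choi-Effros product $X \star_{\mathcal{P}_{\mathrm{P}}} Y = \mathcal{P}_{\mathrm{P}}(XY)$ coincides on $\mathrm{Attr}(\Phi)$ with the Choi-Effros product $\star$ of $\Phi$. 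Since $\Phi$ is peripherally automorphic, $X \star Y = XY$ on $\mathrm{Attr}(\Phi)$, and therefore so does $\star_{\mathcal{P}_{\mathrm{P}}}$. Applying Proposition~\ref{prop_idem_pos_per_aut} to $\mathcal{P}_{\mathrm{P}}$ yields that $\mathcal{P}_{\mathrm{P}}$ is completely positive.

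To conclude, I would extract the complete positivity of $\mathcal{P}_{11}$ from that of $\mathcal{P}_{\mathrm{P}}$. By Theorem~\ref{struc_Heis_th_Schw} we have $\mathcal{P}_{\mathrm{P}}(X) = \mathcal{P}_{00}(X_{00}) \oplus \mathcal{P}_{11}(X_{00})$, which only depends on the $(00)$-block. Given any $n \geqslant 1$ and any positive $Y \in M_n \otimes \mathrm{Attr}(\phi_{00})$, its embedding $Y \oplus 0 \in M_n \otimes \mathcal{B}(\mathcal{H})$ is positive, so $(\mathrm{id}_n \otimes \mathcal{P}_{\mathrm{P}})(Y \oplus 0)$ is positive, and compressing to the $(\mathcal{H}_1,\mathcal{H}_1)$-block preserves positivity, giving $(\mathrm{id}_n \otimes \mathcal{P}_{11})(Y) \geqslant 0$. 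Hence $\mathcal{P}_{11}$ is completely positive. Unitality follows from $\mathcal{P}_{\mathrm{P}}(\mathbb{I}) = \mathbb{I}$ applied to the $(11)$-block.

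The one step that requires a little care is the transfer of peripheral automorphy from $\Phi$ to $\mathcal{P}_{\mathrm{P}}$, but as noted above this is essentially tautological once one observes that $\mathcal{P}_{\mathrm{P}}$ coincides with its own peripheral projection and has the same attractor subspace as $\Phi$. Everything else is a straightforward invocation of Proposition~\ref{prop_idem_pos_per_aut} plus the block structure of~\eqref{struc_P_p_Heis_Schw}.
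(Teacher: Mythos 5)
Your proposal is correct and follows essentially the same route as the paper, which simply states that $\mathcal{P}_{\mathrm{P}}$ is a peripherally automorphic unital positive (Schwarz) idempotent map ``by the properties of $\Phi$'' and invokes Proposition~\ref{prop_idem_pos_per_aut}; you have merely spelled out the details the paper leaves implicit, including the correct observation that $\mathcal{P}_{\mathrm{P}}$ is its own peripheral projection so its Choi-Effros product coincides with that of $\Phi$ on $\mathrm{Attr}(\Phi)$, and the block-compression argument for $\mathcal{P}_{11}$.
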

	\begin{proof}
		$\mathcal{P}_{\mathrm{P}}$ is a peripherally automorphic Schwarz idempotent map by the properties of $\Phi$, and the corollary immediately follows from Proposition~\ref{prop_idem_pos_per_aut}.
	\end{proof}
	
	Roughly speaking, replacing the complete positivity property with the operator Schwarz inequality for the map $\Phi$ simply gives more freedom to the operator $\mathcal{P}_{11}$. Importantly, the attractor subspace $\mathrm{Attr}(\Phi)$ of $\Phi$ still admits a block diagonal decomposition in terms of the decomposition~\eqref{H_dec_supp} of $\mathcal{H}$. Observe that the decomposition~\eqref{dec_N_H01} for the algebra $\mathcal{N}_{\star}$ discussed in Section~\ref{struc_DFA_sec} still holds for Schwarz maps, as well as the initial decomposition~\eqref{sum_dec_N_star} recalled in the same section.

	\section{Conclusions} 

	In this work we discussed open-system asymptotic dynamics in the Heisenberg picture. First, we explicitly found the structure of the attractor subspace, and we characterized the evolution on this subspace. In particular, we found that the faithful component of the attractor subspace is mapped into a (transient) orthogonal component by an arbitrary UCP map. On the level of the evolution and as in the Schr\"odinger picture, permutations can make the asymptotic dynamics non-unitary in general,  even in the faithful case. Mathematically, permutations are described by an external $*$-morphism. Additionally, superselection rules appear between subspaces.
	
	%	It is clear that there is no counterpart in the Schr\"odinger picture for the asymptotic part of the evolution, and it would be interesting to find a complete characterization of the duality between the attractors of the two dynamics~\cite{jex_st_2012,jex_st_2018}.	
	In this regard, it would be interesting to derive a complete characterization of the unitarity of the asymptotic map for Heisenberg dynamics like the one considered in~\cite{AFK_asympt} for Schr\"odinger evolutions.	
	%\rev{This could be used in order to find and apply the decoherence-free formalism in the context of quantum error correction~\cite{Lidar1999}.}

Subsequently, we showed how the Choi-Effros product, endowing the attractor subspace of a Heisenberg evolution,  emerges from the composition product equipping the attractor subspace of the reduced dynamics. Interestingly, the latter is a faithful representation of the attractor subspace of the Heisenberg dynamics on a smaller Hilbert space~\cite{gelfand1943imbedding}. Also, we obtained an explicit expression for the Choi-Effros decoherence-free algebra~\eqref{star_dec_def}, that was introduced in~\cite{AFK_asympt3} as a natural generalization of the standard decoherence-free algebra.

In conclusion, we showed that all the results obtained in the present work are valid for the larger class of Schwarz maps, so the operator Schwarz inequality seems to be the crucial property in the study of the asymptotics of open quantum systems. 

As a final remark, all the results discussed in this work may be generalized to finite-dimensional $C^\ast$-algebras in a straightforward way.

	\appendix
	\section*{Appendix A: Peripheral eigenvectors of $\Phi$}
	\def\theequation{A.\arabic{equation}}
	\setcounter{equation}{0}
	\label{albert_proof} 

	In this Appendix we will  overview 
	the structure of the peripheral eigenoperators of a generic UCP map $\Phi$.
	%
	%Let us derive again~\cite[Proposition 3]{albert2019asymptotics}, providing the structure of the peripheral eigenoperators of a UCP map $\Phi$ in the general case. 
	To this purpose, we will mimic the path followed in Section~\ref{struc_attr_Heis_sec} for the derivation of Theorem~\ref{struc_Heis_th}. We will consider the action of the map $\Phi$ on block diagonal elements. Specifically,	\begin{equation}
		X=\begin{pmatrix}
			X_{00}& 0\\
			0 & 0
		\end{pmatrix}\mapsto
		\Phi(X)=\begin{pmatrix}
			\phi_{00}(X_{00})& \phi_{01}(X_{00})\\
			\phi_{10}(X_{00}) & \phi_{11}(X_{00})
		\end{pmatrix},
	\end{equation}
	and
	\begin{equation}
		Y=\begin{pmatrix}
			0& 0\\
			0 & Y_{11}
		\end{pmatrix}\mapsto
		\Phi(Y)=\begin{pmatrix}
			\psi_{00}(Y_{11})& \psi_{01}(Y_{11})\\
			\psi_{10}(Y_{11}) & \psi_{11}(Y_{11})
		\end{pmatrix},
	\end{equation}
	with $X_{00}$ in $\mathcal B(\mathcal H_{00})$, $Y_{11}$ in $\mathcal B(\mathcal H_{11})$, and $\phi_{ij}: \mathcal{B}(\mathcal{H}_{0})\mapsto \mathcal{B}(\mathcal{H}_{j} , \mathcal{H}_{i})$, $\psi_{ij}: \mathcal{B}(\mathcal{H}_{1})\mapsto \mathcal{B}(\mathcal{H}_{j} , \mathcal{H}_{i})$ two families of linear maps.  Since $\Phi$ is completely positive, $\phi_{ii}$ and $\psi_{ii}$ are completely positive too for $i=0,1$. The unitality condition $\Phi(\mathbb{I}_{\mathcal H})=\mathbb{I}_{\mathcal H}$ implies 
	\begin{align}
		&\phi_{ij}( \mathbb{I}_{\mathcal{H}_0}) + \psi_{ij} (\mathbb{I}_{\mathcal{H}_1})= 0, \quad i , j=0,1,\; i \neq j, \\
		&\phi_{ii}(\mathbb{I}_{\mathcal{H}_0}) + \psi_{ii}(\mathbb{I}_{\mathcal{H}_1})=\mathbb{I}_{\mathcal{H}_{i}}, \quad i=0,1 ,
	\end{align}
	%with $0_{ij}: \mathcal{H}_j \rightarrow \mathcal{H}_i$ the zero operator.
	Since $\phi_{00}$ is a UCP map, we have $\phi_{00} (\mathbb{I}_{\mathcal{H}_0})=\mathbb{I}_{\mathcal{H}_{0}}$ implying that $\psi_{00}(\mathbb{I}_{\mathcal{H}_1})=0$. Since it is a positive map, we have $\psi_{00}=0$ by a corollary of the Russo-Dye theorem~\cite[Corollary 2.3.8]{bhatia2009positive}, and from the positivity of $\Phi$ we have $\psi_{01}=\psi_{10}=0$ by~\cite[Proposition 1.3.2]{bhatia2009positive}.
	
	Therefore, we can write the action of $\Phi$ on an element $X=X_{00}\oplus X_{11}$ as
	\begin{equation}
		X=
		\begin{pmatrix}
			X_{00}& 0\\
			0 & X_{11}
		\end{pmatrix}\mapsto 
		\Phi(X)=\begin{pmatrix}
			{\phi}_{00}(X_{00}) & \phi_{01}(X_{00}) \\
			\phi_{10}(X_{00}) & \phi_{11}(X_{00}) + \psi_{11}(X_{11})
		\end{pmatrix},
	\end{equation}
	where $\phi_{00}$ is a faithful UCP map, $\phi_{11}, \psi_{11}$ are completely positive maps satisfying
	\begin{equation}
		\phi_{11}(\mathbb{I}_{\mathcal{H}_0}) + \psi_{11}(\mathbb{I}_{\mathcal{H}_1})=\mathbb{I}_{\mathcal{H}_{1}},
	\end{equation}
	and $\phi_{ij}$ are such that $\phi_{ij}( \mathbb{I}_{\mathcal{H}_0})= 0$.
	Note that, at variance with $\Phi^\dagger$, $0 \oplus \mathcal{B}(\mathcal{H}_{1})$ is invariant under the action of $\Phi$, while $\mathcal{B}(\mathcal{H}_0) \oplus 0$ is not.
	%Until now we have only exploited the fact that $\Phi$ is a unital positive map. 
	%We also obtain that $\phi_{11}$ and $\psi_{11}$ are Schwarz maps by applying the operator Schwarz inequality \eqref{op_Schw_ineq} to $X=X_{00} \oplus 0 \in \mathcal{B}(\mathcal{H}_{0}) \oplus 0 $ and $X= 0 \oplus X_{11} \in 0 \oplus \mathcal{B}(\mathcal{H}_{1})$ respectively. 
	%Moreover, $\phi_{11}$ and $\psi_{11}$ are completely positive maps by applying the complete positivity of $\Phi$ on $X=X_{00} \oplus 0 \in \mathcal{B}(\mathcal{H}_{0}) \oplus 0 $ and $X= 0 \oplus X_{11} \in 0 \oplus \mathcal{B}(\mathcal{H}_{1})$ respectively.
	
	Take now $X=X_{00} \oplus 0$ with $X_{00}\in \mathrm{Attr}(\phi_{00})$. Then $\phi_{10}(X_{00})=\phi_{01}(X_{00})=0$ because of the operator Schwarz inequality~\eqref{op_Schw_ineq}, by repeating the same computation allowing us to derive equation~\eqref{eq:condition_off_channel}. So, given $X\in \mathrm{Attr}(\Phi)$, the action of $\Phi$ reads
	\begin{equation}
		\Phi(X)= \begin{pmatrix}
			\phi_{00}(X_{00}) & \\
			& \phi_{11}(X_{00}) + \psi_{11}(X_{11})
		\end{pmatrix},
	\end{equation}
	with $X_{11}=\mathcal P_{11}(X_{00})$. The eigenvalue equation $\Phi(X)=\lambda X$ with $\lambda \in \mathrm{spect}_{\mathrm{P}}(\Phi)$ can be written explicitly as 
	\begin{align}
		&\phi_{00}(X_{00})=\lambda X_{00},\label{eigen_co1}\\
		&\phi_{11}(X_{00}) + \psi_{11}(X_{11})=\lambda X_{11}. \label{eigen_co2}
	\end{align}
	We can rewrite the latter as
	\begin{align}
		(\lambda - \psi_{11})^{-1}\phi_{11}(X_{00})= X_{11},
		\label{eigen_per_prob_2}
	\end{align}
	a result obtained in~\cite[Proposition 3]{albert2019asymptotics}. 
	
	Observe that $\lambda - \psi_{11}$ is invertible, or equivalently $\lambda \not\in \mathrm{spect}(\psi_{11})$. Indeed, if this were not true, viz. if there existed $0 \neq Y_{11}\in \mathcal{B}(\mathcal{H}_{1})$ so that 
	\begin{equation}
		\psi_{11}(Y_{11})=\lambda Y_{11},
	\end{equation}
	then
	\begin{equation}
		\Phi(0 \oplus Y_{11})=\lambda( 0 \oplus Y_{11}),
	\end{equation}
	and $0 \oplus Y_{11} \in \mathrm{Attr}(\Phi)$, in stark contrast with equation~\eqref{struc_attr_Heis}. In particular, by choosing $\lambda = 1$, we find that $\psi_{11}$ cannot be unital or, equivalently, $\phi_{11} \neq 0$.
	
	Incidentally, notice also that the maps $\phi_{11}$, $\psi_{11}$ and $\phi_{00}$ are related to the maps $\mathcal{P}_{ii}$ describing the action of the peripheral projection $\mathcal{P}_{\mathrm{P}}$ of $\Phi$ via
	\begin{equation}
		\label{comm_co}
		\phi_{11} \circ \mathcal{P}_{00}+\psi_{11} \circ \mathcal{P}_{11} = \mathcal{P}_{11} \circ \phi_{00},
	\end{equation}
	arising from the commutation relation $[\mathcal{P}_{\mathrm{P}} , \Phi]=0$.  Consistently, the left-hand side of~\eqref{eigen_co2}  becomes as a consequence of~\eqref{comm_co}
	\begin{equation}
		\phi_{11}(X_{00}) + \psi_{11}(X_{11})= \mathcal{P}_{11} \circ \phi_{00}(X_{00}) = \lambda \mathcal{P}_{11} (X_{00}) ,
	\end{equation}
	which is in line with~\eqref{as-struc_Heis}.
	
	To sum up, one has the following result.
	
	\begin{thm}
		\label{per_eigen_th}
		Let $\Phi$ be a UCP map with peripheral projection $\mathcal{P}_{\mathrm{P}}$ and let $\mathcal{H}_{0}=\mathrm{supp}(\mathcal{P}_{\mathrm{P}}^\dagger(\mathbb{I}))$ be the maximal support of the elements of its attractor subspace $\mathrm{Attr}(\Phi)$ and $\mathcal{H}_{1}=\mathcal{H}_{0}^\perp$. Then the action of $\Phi$ on a block diagonal operator $X= X_{00} \oplus X_{11} \in \mathcal{B}(\mathcal{H}_{0}) \oplus \mathcal{B}(\mathcal{H}_{1})$ reads
		\begin{equation}
			\label{diag_action_Phi_2}
			\Phi(X)=\begin{pmatrix}
				\phi_{00}(X_{00}) & \phi_{01}(X_{00}) \\
				\phi_{10}(X_{00}) & \phi_{11}(X_{00}) + \psi_{11}(X_{11})
			\end{pmatrix},
		\end{equation}
		where $\phi_{00}$ is a faithful UCP map, $\phi_{11}, \psi_{11}$ are completely positive maps satisfying
		\begin{equation}
			\phi_{11}(\mathbb{I}_{\mathcal{H}_0}) + \psi_{11}(\mathbb{I}_{\mathcal{H}_1})=\mathbb{I}_{\mathcal{H}_{1}},
		\end{equation}
		and $\phi_{ij}$ $i \neq j$ are linear maps with $\phi_{ij}(\mathbb{I}_{\mathcal{H}_0})=0$. 
		
		Consequently, the eigenoperator $X^{(\lambda)}$ of $\Phi$ corresponding to the peripheral eigenvalue $\lambda$, 
		\begin{equation}
			\Phi(X^{(\lambda)})=\lambda X^{(\lambda)} , \quad \lambda \in \mathrm{spect}_{\mathrm{P}}(\Phi),
		\end{equation}
		has the form
		\begin{equation}
			\label{Xlambda_struc}
			X^{(\lambda)} = \begin{pmatrix}
				X_{00}^{(\lambda)} & \\
				& \mathcal{P}_{11}(X_{00}^{(\lambda)})
			\end{pmatrix}, \quad \text{with} \quad \phi_{00}(X_{00}^{(\lambda)}) = \lambda X_{00}^{(\lambda)}.
		\end{equation}
		%		\begin{align}
			%			\label{eigen_per_prob_1_th}
			%			&{\phi_{00}}(X_{00})=\lambda X_{00},\\
			%			&(\lambda - \psi_{11})^{-1}\phi_{11}(X_{00})= X_{11}.
			%			\label{eigen_per_prob_2_th}
			%		\end{align}
	\end{thm}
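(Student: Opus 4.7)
The plan is to mirror the block-decomposition strategy used to derive Theorem~\ref{struc_Heis_th}, now applied directly to $\Phi$ (rather than to $\mathcal{P}_{\mathrm{P}}$) with respect to the splitting $\mathcal{H} = \mathcal{H}_0 \oplus \mathcal{H}_1$. The argument naturally splits into two halves: first establish the block-matrix formula~\eqref{diag_action_Phi_2} for $\Phi$ on arbitrary block-diagonal operators, and then extract the peripheral eigenoperator structure~\eqref{Xlambda_struc} by combining the Schwarz inequality with the existing structure theorem for $\mathrm{Attr}(\Phi)$.

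For the first half, I would introduce the families of linear maps $\phi_{ij}: \mathcal{B}(\mathcal{H}_0) \to \mathcal{B}(\mathcal{H}_j,\mathcal{H}_i)$ and $\psi_{ij}: \mathcal{B}(\mathcal{H}_1) \to \mathcal{B}(\mathcal{H}_j,\mathcal{H}_i)$ describing the action of $\Phi$ on $X_{00} \oplus 0$ and $0 \oplus X_{11}$ respectively. Complete positivity of $\Phi$ forces the diagonal pieces $\phi_{ii}, \psi_{ii}$ to be completely positive, and unitality yields four relations on the identities. The key input is that the reduced map $\phi_{00}$ is a faithful UCP map on $\mathcal{B}(\mathcal{H}_0)$, so $\phi_{00}(\mathbb{I}_{\mathcal{H}_0}) = \mathbb{I}_{\mathcal{H}_0}$; the unitality identity then gives $\psi_{00}(\mathbb{I}_{\mathcal{H}_1})=0$. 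Since $\psi_{00}$ is a positive map annihilating the identity, the Russo--Dye corollary forces $\psi_{00}=0$, and positivity of the full map $\Phi$ propagates this to $\psi_{01}=\psi_{10}=0$. This yields~\eqref{diag_action_Phi_2} together with the constraints $\phi_{11}(\mathbb{I}_{\mathcal{H}_0}) + \psi_{11}(\mathbb{I}_{\mathcal{H}_1}) = \mathbb{I}_{\mathcal{H}_1}$ and $\phi_{ij}(\mathbb{I}_{\mathcal{H}_0}) = 0$ for $i \neq j$.

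For the second half, take $X \in \mathrm{Attr}(\Phi)$ with peripheral eigenvalue $\lambda$ and consider its decomposition $X = X_{00} \oplus X_{11}$. Applying the operator Schwarz inequality~\eqref{op_Schw_ineq} to $\Phi$ at the element $X_{00}\oplus 0$ with $X_{00} \in \mathrm{Attr}(\phi_{00})$, and reading off the diagonal $(0,0)$-block exactly as in the derivation of equation~\eqref{eq:condition_off_channel}, kills the off-diagonal entries, giving $\phi_{01}(X_{00}) = \phi_{10}(X_{00}) = 0$. The eigenvalue equation $\Phi(X) = \lambda X$ then collapses to the pair~\eqref{eigen_co1}--\eqref{eigen_co2}. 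Equation~\eqref{eigen_co2} is solved by $X_{11} = (\lambda - \psi_{11})^{-1} \phi_{11}(X_{00})$, and invertibility of $\lambda - \psi_{11}$ follows by contradiction: a peripheral eigenvector of $\psi_{11}$ at $\lambda$ would lift to an element $0 \oplus Y_{11} \in \mathrm{Attr}(\Phi)$ supported on $\mathcal{H}_1$, violating $\mathrm{supp}(\mathrm{Attr}(\Phi)) \subseteq \mathcal{H}_0$ from Theorem~\ref{struc_Heis_th}. Finally, the identification $X_{11} = \mathcal{P}_{11}(X_{00})$ follows directly from Theorem~\ref{struc_Heis_th}, or equivalently from applying the commutation relation $[\mathcal{P}_{\mathrm{P}}, \Phi] = 0$ to $X_{00} \oplus 0$, which gives~\eqref{comm_co} and, via uniqueness in~\eqref{eigen_co2}, forces $X_{11} = \mathcal{P}_{11}(X_{00})$.

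The main obstacle I anticipate is the cascade of vanishing statements $\psi_{00} = 0 \Rightarrow \psi_{01} = \psi_{10} = 0$, which requires careful invocation of the Russo--Dye corollary and of the off-diagonal positivity lemma, plus a clean justification that $\phi_{00}(\mathbb{I}_{\mathcal{H}_0}) = \mathbb{I}_{\mathcal{H}_0}$ at the outset (i.e., that $\mathcal{H}_0$ is genuinely invariant for the Heisenberg-reduced map). The subsequent Schwarz step is mechanical once Theorem~\ref{struc_Heis_th} is in hand, and the non-resonance condition $\lambda \notin \mathrm{spect}(\psi_{11})$ is where the structural result on $\mathrm{Attr}(\Phi)$ does the real work, preventing spurious eigenvectors from appearing on the transient sector $\mathcal{H}_1$.
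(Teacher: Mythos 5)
Your proposal follows essentially the same route as the paper's Appendix~A: the same block decomposition with the maps $\phi_{ij},\psi_{ij}$, the same cascade $\psi_{00}=0\Rightarrow\psi_{01}=\psi_{10}=0$ via the Russo--Dye corollary and off-diagonal positivity, the same Schwarz-inequality step to kill $\phi_{01},\phi_{10}$ on $\mathrm{Attr}(\phi_{00})$, and the same non-resonance argument for $\lambda-\psi_{11}$ via the structure theorem. The argument is correct as outlined.
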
 
	
	Observe that Theorem~\ref{per_eigen_th} still holds for Schwarz maps, with the only difference that the maps $\phi_{00}$, $\phi_{11}$ and $\psi_{11}$ would  not be completely positive in general, but they just satisfy the operator Schwarz inequality~\eqref{op_Schw_ineq}. In particular, in order to derive~\eqref{Xlambda_struc}, i.e.\ the structure of the peripheral eigensystem of $\Phi$, we do not need complete positivity and, in particular, the Kraus representation of $\Phi$ used to obtain it in~\cite{albert2019asymptotics}, but only the operator Schwarz inequality~\eqref{op_Schw_ineq}.
	
	\section*{Acknowledgements}
    We are grateful for various comments from the Referee.\
	DA also thanks Federico Girotti for pointing out Reference~\cite{gartner2012coherent}.  AK acknowledges the support by the (Polish) National Science Center through the SONATA BIS Grant No. 2019/34/E/ST2/00369. AK also acknowledges the support from the ``Young Scientist" project at the Center for Theoretical Physics, Polish Academy of Sciences. We acknowledge support from INFN through the project ``QUANTUM'', from the Italian National Group of Mathematical Physics (GNFM-INdAM), and from the Italian funding within the ``Budget MUR - Dipartimenti di Eccellenza 2023--2027''  - Quantum Sensing and Modelling for One-Health (QuaSiModO).\ PF acknowledges support from Italian PNRR MUR project CN00000013 -``Italian National Centre on HPC, Big Data and Quantum Computing''. DA acknowledges support from PNRR MUR project PE0000023-NQSTI.

\endpaper
\end{document}